\documentclass[a4paper,onecolumn,11pt]{quantumarticle}
\pdfoutput=1
\usepackage[utf8]{inputenc}
\usepackage[english]{babel}
\usepackage[T1]{fontenc}
\usepackage{amsmath}
\usepackage{hyperref}

\usepackage{tikz}  
 
\usepackage{fullpage}
\usepackage{amsfonts,amssymb}
\usepackage{amsthm}
\usepackage{tikz}
\usetikzlibrary{calc}
\usetikzlibrary{arrows}
\usepackage{tikz-3dplot}
\usepackage{color}  
\usepackage{hyperref} 
\usepackage{bm}
\usepackage{float} 
\usepackage{appendix} 
\usepackage{lscape}
\usepackage{cite}
\usepackage{mathrsfs}
\usepackage{appendix}
\usepackage{setspace}
\usepackage{mathtools}

\theoremstyle{definition}
\newtheorem{definition}{Definition}[section]
\newtheorem{lemma}[definition]{Lemma}
\newtheorem{theorem}[definition]{Theorem}

\begin{document}
	
	\title{Tsirelson's Bound and the Quantum Monogamy Bound from Global Determinism}
	
	\author{Emily Adlam} 
	
		\affiliation{The University of Western Ontario}
	
	\maketitle
	
	\abstract{ We demonstrate that in a globally deterministic universe, all spatiotemporally symmetric processes must obey counterfactual parameter independence. We show that  the Tsirelson bound can be derived from counterfactual parameter independence. We show that the quantum monogamy bound can also be obtained from global determinism, and then then use these results to propose a novel solution to the horizon problem. We also explain how global determinism relates to contextuality in quantum mechanics. }

	\section{Introduction} 
	
	In 1964 Bell showed that quantum mechanics violates a certain inequality which must be satisfied by any local theory\cite{Bellinequality}, and thus set off a conversation around non-locality which continues to this day. Broadly speaking, there are two opposing schools of thought about how we should respond to Bell's theorem. The first says that we should find a way to rescue locality by denying one of the assumptions that goes into the theorem. However, this requires us to accept behaviours which may seem just as surprising as nonlocality, such as superdeterminism\cite{10.3389/fphy.2020.00139}. The alternative is to make our peace with the existence of genuine non-local correlations. However, we are still left with a puzzle: if the world is generically non-local, why are there such strong limits on the ways in which non-locality can be manifested? For example, why can't we use non-locality to send signals faster than light? 

In ref \cite{QMG} we set out a partial answer to this puzzle. We observed that if the world is generically non-local, then the usual reasons for thinking that quantum mechanics is probabilistic do not apply, since the outcomes of measurements could depend on global variables as well as the local ones to which we have access. Thus it could be the case that the universe is deterministic in a generalized sense on a global scale. We then showed that signalling is not compatible with global determinism. The reason for this is quite simple: if signalling were possible, then Alice could send a faster-than-light message to Bob, and Bob could send the content of this message as another faster-than-light message back to Alice so that it arrived just in time for her to use it as the content of her original message to Bob. This would create a closed causal loop where the content of the message would effectively have come out of of nowhere, i.e. it would not be determined by anything. Thus the hypothesis of global determinism explains the fact that we can't send signals faster than light (and equivalently, the fact that we can't send signals faster than light can be regarded as evidence for the hypothesis of global determinism). In ref \cite{QMG} we formalised this intuition using the language of entropy and mutual information. 

However this is not the end of the story, because no-signalling does not suffice to define the set of non-local correlations which can be achieved within quantum mechanics. This fact emerged via the work of Popescu and Rohrlich on non-local boxes\cite{Rohrlich} – that is, sets of hypothetical devices which accept inputs and produce outputs which may be correlated in any mathematically describable way. Popescu and Rohrlich described a specific pair of boxes, now known as  ‘PR boxes,’ which cannot be used to perform signalling but which can be used to produce a value of $4$ for a particular function of expectation values known as the CHSH quantity. But it has been shown by Tsirelson that no arrangement of quantum mechanical systems can produce a value greater than $2\sqrt{2}$ for the CHSH quantity\cite{Tsirelson}, so the PR boxes exhibit a stronger form of non-locality than quantum mechanics. Thus one might hope that it is possible to use global determinism to derive not only the no-signalling bound but the stronger quantum bound. In this paper we demonstrate that this is indeed the case - the Tsirelson bound can be derived from a principle known as information causality, which can be derived from global determinism. 

 Moreoever, a similar situation crops up in the study of quantum monogamy: it has been shown that any no-signalling theory must obey a monogamy bound, but the monogamy bound obeyed by quantum mechanics is stronger than the bound implied by no-signalling. We show that this quantum bound can likewise be derived from global determinism, since it follows from no-signalling and information causality. We describe a possible application of this result to the black hole horizon problem. Finally, we discuss the relationship between global determinism and contextuality.

\section{Background}

\subsection{Global Theories}

It is common in physics to suppose that the universe works like a computer, taking an initial state and evolving it forward in time in a temporally local way, so that events at a given time can depend only on other events at the same time. But Bell's theorem tells us that the laws of nature involve some sort of nonlocality \cite{Bellfree}, and special relativity indicates that if the laws of nature are spatially non-local then they are likely to also be temporally non-local\cite{adlamspooky}, and therefore we have good reasons to reconsider this assumption of temporal locality. 

Although the mainstream literature on quantum mechanics is still dominated by temporally local approaches, in recent years a variety of interesting work has been done on non-standard temporal pictures. For example, there has been growing interest in retrocausal approaches to quantum theory\cite{Goldstein_2003, PRICE_1994, Priceretro}, including a proof by Pusey and Leifer demonstrating that if quantum mechanics obeys a certain sort of time-symmetry then it must exhibit retrocausality\cite{PuseyLeifer} and a model due to Wharton which suggests a natural resolution to the quantum reality problem using the `all-at-once'-style analysis of action principles\cite{Whartoninformation}. In such models it is supposed that the laws of nature apply to the whole history of the universe all at once, so that events at a given time can influence events in both the past and future without the need for any mediating state. In this paper we adopt this global approach and investigate what local observations can tell us about the structure of the global laws.  Specifically, we consider the hypothesis that the global laws are deterministic in a generalized sense, and show that this hypothesis can be used to explain some characteristic features of quantum mechanics. Thus by inference to the best explanation, this result provides some support for the hypothesis of global determinism and more broadly for the global approach. 
 
What exactly do we mean by `deterministic in a generalized sense'? Well, recall that the usual Laplacean definition of determinism is intended for a temporally local setting: a theory is said to be deterministic if the state on a given time-slice fully determines all events to the future of that time-slice. But there is no reason to adopt this sort of arbitrary temporal constraint in a theory where the laws of nature apply all-at-once to the whole of history, so in this context we need a less restrictive definition of determinism. Thus we will say that a \emph{globally deterministic} theory is one where the laws of nature prescribe a unique course of history, though it may not be possible to infer the course of history from the data on any single time-slice. 

Global determinism is entirely consistent with the apparently probabilistic nature of quantum mechanics, since quantum mechanics is based on our limited observations from a local standpoint. An analogy may be helpful. If a completed sudoku square were to be revealed  column by column to an observer who could not see the whole square at once, each column would appear to be related to the previous one in a probabilistic way; there would be obvious patterns of dependence which could be described by probabilistic rules, but there would not usually be enough information available to determine the next column exactly. Nonetheless, if the grid has been set up properly then when one considers the whole puzzle at once it is clear that the solution is unique. Similarly, in the global approach the apparently probabilistic nature of quantum theory may be a consequence of  our inability to see the whole picture: if we were able to consider the whole universe at once it would be clear that the `solution' is unique. 

Note that global determinism is not the same as superdeterminism, which is a research programme in quantum mechanics concerned with the possibility of restoring locality to quantum mechanics by postulating models which  violate the principle of statistical independence which is used in the proof of Bell's inequality. Despite the name, superdeterministic models do not actually have to be deterministic, and when they \emph{are} deterministic this is usually supposed to be determinism in the standard Laplacean sense. Global determinism is significantly more general than this idea: certainly it would be possible for a model to satisfy both global determinism and superdeterminism, but the two do not have to go together. 

\subsection{Information Causality \label{entropy}}

Non-locality for bipartite systems is often measured using the $CHSH$ quantity. Suppose we may perform two different measurements $M^0, M^1$ on two different systems $A$ and $B$, each measurement having two possible outcomes labelled by $1$ and $-1$. We define the $CHSH$ quantity for this arrangement by 	 the following sum of expectation values:

 \[ CHSH_{AB}   =   \langle A B \rangle_{ M_A^0, M_B^0}  + \langle A B \rangle_{  M_A^0, M_B^1} +\langle  A B \rangle_{  M_A^1, M_B^0}- \langle A B\rangle_{ M_A^1, M_B^1} \] 

For classical systems, this quantity can be no larger than two, while for general non-signalling systems such as the $PR$ boxes it can be as large as 4. For quantum systems it is bounded above by $2 \sqrt{2}$, a value known as the \emph{Tsirelson bound}. Thus if we accept that Bell's theorem proves the existence of non-locality, we are left with  an important question about why quantum mechanics is not as non-local as the maximally non-local non-signalling theory.

 In ref \cite{Pawlowski} it was shown that the Tsirelson bound can be derived from an information theoretic principle known as information causality. This principle is most easily set out in the context of a game in which Alice is in possession of a string of bits $A = a_0 a_1 ... a_n$ and is allowed to send Bob a classical message $c$, and then Bob is given an integer $m$ chosen at random from $[0 ...n)$ and is required to guess the value of the $m^{th}$ bit of Alice's string. Information causality can then be understood as the requirement that the total amount of information accessible to Bob, across all the measurements that he could possibly choose to perform, is no greater than the amount of information in Alice's message $c$. This criterion is expressed formally in terms of classical Shannon entropies and mutual informations:

\[ \sum_x I( a_x : g c | m = x ) \leq H(c)  \]

This principle has a certain level of intuitive plausibility because it looks like a generalisation of the no-signalling principle, but it is important to reinforce that in quantum mechanics the measurements corresponding to different values of $m$ cannot all be performed simultaneously, and therefore this principle constrains not only the amount of information that Bob can actually obtain, but also the amount the he could \emph{counterfactually} obtain across a set of measurements that he cannot perform simultaneously. Information causality is therefore not implied by relativistic causality (which deals only with actual facts, not counterfactual ones) and indeed on reflection it seems a rather surprising constraint - why should the universe care, so to speak, about counterfactual measurements that cannot actually be performed? After all, as demonstrated by the PR boxes, it is possible to come up with correlations that violate this constraint which seem at least superficially reasonable.

In the subsequent literature it was observed that information causality is closely connected to features of the quantum mechanical von Neumann  entropy, and indeed there exist several derivations of it from  entropic principles. Ref \cite{Al_Safi_2011} shows that information causality can be derived from the existence of a measure of entropy for systems which reduces to the classical (Shannon) entropy if the system is classical and which satisfies the condition that for any two systems $X, Y$, if a transformation is performed on $Y$ alone then the change in entropy for $X, Y$ together is greater than or equal to the change in entropy for $Y$ alone. Ref \cite{Barnum_2010} shows that information causality can be derived from the existence of a measure of entropy such that the entropy associated with measurements is the same as the entropy associated with preparations, provided that this entropy is strongly subadditive   and satisfies the Holevo bound\cite{Holevo1973SomeEO}. Ref \cite{Dahlsten_2012} shows that information causality can be derived from the existence of a measure of entropy which satisfies the data processing inequality. Clearly  these proofs are all latching onto the same underlying fact - information causality must be satisfied if it is possible to write down a measure of entropy which behaves like the classical Shannon entropy in certain sorts of situations. Moreoever, because the Shannon entropy and its properties are so familiar and intuitive, these results initially seem quite compelling. 

However, it is important to be careful with such analogies. Recall that the Shannon entropy is designed to describe a random variable and thus our intuitions about it are based on our understanding of the behaviour of random variables, so strictly speaking it is incorrect to talk about the Shannon entropy of a \emph{system}. However, it turns out that we \emph{can} sensibly define the Shannon entropy of a classical system, for  one simple reason: measurements on classical systems are non-destructive. This means that for any state of a classical system we can always perform all possible measurements and thus the system can be described by a random variable which takes values indexed by all possible combinations of measurement outcomes. Thus from a mathematical point of view a classical system is equivalent to a random variable, and therefore it is not surprising that we can describe classical systems using an entropy which obeys entropic relations like the data processing inequality, even though the Shannon entropy is strictly speaking applicable to random variables rather than systems. 

But this is \emph{not} true for quantum systems, because quantum measurements usually cannot be performed simultaneously, and a system for which some observables are not simultaneously measurable need not behave like a classical variable. For example, each of refs \cite{Al_Safi_2011, Barnum_2010, Dahlsten_2012} uses their chosen entropic principles to derive some variant on the following relation (where the bits $a_x$ of $A$ are independent, and where $B$ represents some quantum system which Bob may measure in order to determine his guess for the desired bit value): 

\[   I ( A : B c  ) \geq \sum_x I ( a_x : B c )   \]

Since the entropy of a system is usually understood to represent something like the minimum Shannon entropy over measurement outcomes (minimizing over all possible non-trivial measurements)\cite{Barnum_2010}, it is straightforward to see that if $b$ is the outcome of the measurement $m$ performed on $B$, we can obtain the following equation, from which information causality can easily be derived:
  
\begin{equation} \max_q I ( A : b c | m = q) \geq \sum_x I ( a_x : b c | m = x) \label{dp} \end{equation}

That is, the maximum information about $A$ that we can obtain from a single measurement on $B$ together with the classical message $c$ must be greater than the sum of the information we could obtain about the bits of $A$ in a set of distinct measurements on $B$ together with the classical message $c$. Now, it is obvious that this must be true in a system where all  measurements can simultaneously be performed, since we can simply combine all the measurements $m = x$ into a single measurement which necessarily gives us all the same information about the bits of $a$ as we could have obtained by performing the measurements separately. But in a theory with incompatible measurements equation \ref{dp} need not hold, because one could in principle obtain non-overlapping information from each of the incompatible measurements on the right hand side of the equation. So although equation \ref{dp} does in fact hold in quantum mechanics, the reason it holds cannot be the same as the reason it holds in classical mechanics. 

This means that trying to explain information causality by appealing to the existence of an entropy measure which behaves similarly to the Shannon entropy really only serves to restate the fundamental problem in a more mathematical language: we are left with an essentially equivalent puzzle about why there should be any such measure in a theory which incorporates incompatible measurements. Certainly, equation \ref{dp} is a is a necessary condition for the possibility of employing an entropy-like measure which obeys a data-processing inequality, which entails that 
if we take the data processing inequality or the existence of an entropy-like measure as a basic postulate we can derive equation \ref{dp}. But this seems to get things the wrong way round: unless we are willing to  reify the von Neumann entropy and regard it as a fundamental object of the theory\footnote{Although it would in principle to be possible to consider the von Neumann entropy to be fundamental, this route has not been taken by any mainstream approach to the interpretation of quantum mechanics. Moreover it would be a controversial position, since that entropy is usually understood as a measure of ignorance or uncertainty, and most people think that ignorance or uncertainty must be \emph{about} something\cite{timpson2008philosophical}, which suggests that entropy can't be fundamental. This view might perhaps find a more natural home within the `It From Bit'\cite{Wheeler1989-WHEIPQ, D_Ariano_2016} approach to the interpretation of quantum mechanics, which suggests that information should be regarded as fundamental, but it is not clear that this approach has yet been formulated in a philosophically coherent way.} then the existence of an entropy-like measure cannot explain why quantum mechanics obeys equation \ref{dp} -  rather it is the fact that quantum mechanics obeys equation \ref{dp}  and others like it which makes the entropic representation possible in the first place! Rather than deriving equation \ref{dp} and thus information causality from entropic principles, we should see equation \ref{dp} as a precondition for the possibility of using entropic terminology, which means we still need an explanation for equation \ref{dp}, and more  generally, we need an explanation for the fact that under certain circumstances quantum systems behave as if they can be described by classical variables, even though the existence of incompatible measurements blocks the simple explanation available in the case of classical systems.

Note that unless otherwise stated, the entropies and mutual informations that feature in the mathematical results and proofs in this paper are \emph{classical} (Shannon) entropies, and therefore they automatically obey all the usual rules of classical information theory, such as strong subadditivity and the data processing inequality.

\subsection{The Process Framework} 

In this paper we will be dealing with scenarios involving non-standard causal ordering, and so we will work within the process framework developed by Oreshkov et al\cite{Oreshkov, Oreshkov2, Shrapnel_2018}. Here, an experiment is understood to consist of \emph{local labelled regions} where one can perform \emph{local controlled operations} $J^a, J^b, J^c ... $ that can be associated with \emph{outcomes} $a, b, c ... $. We will sometimes refer to these regions as \emph{non-local boxes} in order to make contact with the literature on non-local correlations; each non-local box takes an input (the local controllables for the region) and produces an output (the outcome for the region).

In addition to the local controlled operations, the outcomes for a set of local labelled regions may depend on the \emph{environment}, which may include `global properties, initial states, connecting mechanisms, causal influence or global dynamics.'\cite{Shrapnel_2018}  By definition, the environment is independent of the local controllables. (If this is not the case, then we can always redefine the local controllables to include whatever part of the environment fails to be independent). 

We then define a process  as an equivalence class of environments, where one environment $W$ is equivalent to another environment $W'$ if $p(a, b, c ... | J^a, J^b, J^c ..., W) = p(a, b, c ... | J^a, J^b, J^c ..., W') $ for any possible set of local controllables $J^a, J^b, J^c ... $ and outcomes  $a, b, c ... $. Thus it is part of the definition of a process that it can (in principle) be reproduced as many times as we like, and that all instances of the process will exhibit the same statistics, i.e. there will be stable probabilistic relationships between the inputs and outputs. For example, performing measurements on two particles in a maximally entangled state is a process, with the choices of measurement directions as the local controllables and the results of the measurements as the outcomes. If we repeat this experiment many times on different maximally entangled states, we will see stable probabilistic relationships between the measurement directions and the measurement results, demonstrating that we have indeed defined an equivalence class of environments.  The fact that the state concerned is maximally entangled is part of the definition of the process; a different choice of entangled state would lead to different probabilistic relationships and so measuring particles in a different entangled state would be a different equivalence class of environments and thus would constitute a different process. 

As noted in ref \cite{Shrapnel_2018}, there is some freedom in where to put the line between `process' and `local controllables.' For example, we can consider `measuring two particles in a maximally entangled state' to refer to a process, with the two choices of measurements as local controllables; but we could also fix the measurement to be made on one of the particles, so it becomes part of the process and the only local controllable is the measurement on the other particle; or we could fix both measurement directions, so they are both part of the process and there are no local controllables. Since the two measurement directions are independent, these choices still satisfy the requirement that the environment and process should be independent of the local controllables. 

However, if we say that local controllables may be incorporated into the process whereas outcomes may not, we are introducing a temporal asymmetry into the definition of a process, and one main purpose of moving to the process framework is to eliminate such temporal asymmetries as far as possible. Therefore we will take it that there is also freedom about where to put the line between `process' and `outcomes.' For example, in the case of measuring two entangled particles we could fix the \emph{outcome} of the measurement on one particle, leading to a process with only one outcome, or we could fix both outcomes, leading to a process with no outcomes.

It is common to study features of  quantum mechanics using the \emph{ontological models} framework, where it is supposed that correlations between preparations and measurement outcomes are mediated by an \emph{ontic state}. However, the standard ontological models framework presupposes a fixed causal order and is therefore unsuitable for the process framework. We will therefore use a generalized ontological models framework, similar to the one defined in \cite{Shrapnel_2018}(see Appendix \ref{cs} for some further comments).  Here, we allow that in addition to the local controllables and the environment, the outcomes of an experiment may depend on an  \emph{ontic variable} $\Omega$. The ontic variable is defined so as to capture physical properties of the world that remain invariant under local operations - that is, spatiotemporally local experimenters such as ourselves are not able to influence the value of the ontic variable in any way. For example, the ontic variable could include hidden variables or global facts which can't be observed or altered by local observers. The independence of the ontic variables will turn out to be essential to our arguments, because in a composition as in fig \ref{figsig}, we can set the local controllables of one process equal to the outcomes of the other process, but since we can't influence the ontic variable we can't make the ontic variable for one process depend on the outcomes of the other process. The ontic variables are therefore independent of any composition that we might perform, and thus they provide a fixed baseline of information about the outcomes of the process which is independent of choices about experimental design.

Note that the `environment' $W$  includes information about the spatiotemporal locations of the local labelled regions. Thus we can set out an important classification for processes, based on terminology introduced in ref \cite{QMG}:

 \begin{definition} 
 	
 	A process is \emph{spatiotemporally symmetric} iff it is invariant under any permutation of the spatiotemporal locations of the local labelled regions.

 \end{definition} 

In this article we will deal only with bipartite spatiotemporally symmetric processes, so the definition simply comes down to saying that the spacetime locations of the two parts of the process can be swapped without changing the operational statistics. For example, the process in which Alice and Bob each perform a measurement on one particle from a pair of entangled particles is spatiotemporally symmetric, because if we swap the spacetime locations of the measurements (including the particles, the measuring devices and Alice and Bob themselves) the correlations observed will be exactly the same. On the other hand, making a telephone call is not a spatiotemporally symmetric process, because the speaker must be in the past light-cone of the listener and so if we swap the spatiotemporal positions of the speaker and the listener we won't observe the same correlations.

The distinction between these two sorts of processes will also turn out to be crucial to our argument, because there are certain sorts of processes in the universe which \emph{can} be used to send signals - for example, telephone calls, letters and semaphore - and so the proof that global determinism prohibits signalling had better not apply to those sorts of processes. But in fact, the proof that global determinism prohibits no-signalling requires that the process in question can be composed to form a loop as shown in fig \ref{figsig}, and obviously we can't make loops like this using telephone calls, letters or semaphore, because they are all spatiotemporally asymmetric - the message must be sent in the past light-cone of the point at which it is received. On the other hand we \emph{can} make loops of this kind using quantum processes, since they are spatiotemporally symmetric. Indeed, it seems that the set of spatiotemporally symmetric processes is exactly the set of non-signalling processes, at least according to our current understanding of physics. 

Note that if we compose a spatiotemporally symmetric process with another spatiotemporally symmetric process, it is not guaranteed that the composed process will remain spatiotemporally symmetric. For example, in the composition shown in fig \ref{figsig}  two spatiotemporally symmetric processes are composed to form a process which is no longer spatiotemporally symmetric, since we are using the outcome of one part of the process to fix the local controllables for another part of the process so we can no longer    freely permute the locations.

\begin{figure}
	\centering
	\begin{tikzpicture}[scale=0.7]

	\coordinate (a) at (0,0);
	\coordinate (b) at (0,2);
	\coordinate (c) at (2,2);
	\coordinate (d) at (2,0);

	\draw[red] (a) -- (b);
	\draw[red] (b) -- (c);
	\draw[red] (c) -- (d);
	\draw[red] (d) -- (a);

	\coordinate (e) at (4,0);
	\coordinate (f) at (4,2);
	\coordinate (g) at (6,2);
	\coordinate (h) at (6,0);

	\draw[black] (e) -- (f);
	\draw[black] (f) -- (g);
	\draw[black] (g) -- (h);
	\draw[black] (h) -- (e);
	
	\coordinate (i) at (4,4);
	\coordinate (j) at (4,6);
	\coordinate (k) at (6,6);
	\coordinate (l) at (6,4);

	\draw[red] (i) -- (j);
	\draw[red] (j) -- (k);
	\draw[red] (k) -- (l);
	\draw[red] (l) -- (i);
	
	\coordinate (m) at (0,4);
	\coordinate (n) at (0,6);
	\coordinate (o) at (2,6);
	\coordinate (p) at (2,4);

	\draw[black] (m) -- (n);
	\draw[black] (n) -- (o);
	\draw[black] (o) -- (p);
	\draw[black] (p) -- (m);

	\draw[red, ->] (1,2) -- (1,2.9);
	\draw[black, ->] (1,3.1) -- (1,4);
	
	\draw[black, ->] (5,2) -- (5,2.9);
	\draw[red, ->] (5,3.1) -- (5,4);
	
	\node[below, red] at (0.5,2.8) {$O^x$};
	\node[below] at (5.5,2.8) {$O^y$};
	\node[below] at (0.5,3.8) {$I^y$};
	\node[below, red] at (5.5,3.8) {$I^x$};
	
	\draw[red, <->, line width=2pt] (c) -- (i);
	\draw[black, <->, line width=2pt] (p) -- (f);

	\node[below] at (1,1.25) {$X_1$};
	\node[below] at (5,5.25) {$X_2$};
	\node[below] at (1,5.25) {$Y_2$};
	\node[below] at (5,1.25) {$Y_1$};
	\end{tikzpicture}	
	\caption{Schematic diagram of the composition of two processes used to prove that in a globally deterministic universe all spatiotemporally symmetric processes must be non-signalling} 
	\label{figsig}
\end{figure}
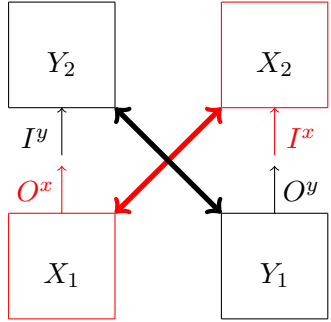

\section{Global Determinism \label{gd}}

 In the terminology of the process framework, global determinism requires that for any instance of a given process, the local controllables and the ontic variable together determine the outcomes. That is, for a given process, the ontic variable defines a function from the local controllables to the outcomes: 
 
 \begin{quotation} 
 	\hspace{-10mm}\textbf{Necessary condition for global determinism:} \emph{For any process $P$ with outcomes $O \in \mathscr{O}$,  there exists a set $V  \in \mathscr{V}$ consisting of local controllables and/or an ontic variable,  and a function $f^P: \mathscr{V} \rightarrow \mathscr{O}$ such that if  process $P$ is performed with the local controllables and ontic variable taking values $v \in \mathscr{V}$, then we obtain outcome variables $o \in \mathscr{O} $ such that $o  = f^P(v)$. }
 	\label{GD}
 	
 	\label{necessary}
 	
 \end{quotation}

 Suppose that  we have some process $P$ with an outcome which can be written as a single bit $b$.  How much information must be provided by the ontic variable $\Omega$ in order to make this process deterministic - i.e. to specify which outcome actually occurs? If we consider only a single instance of the process $P$, the answer is simple: either we need one bit (if the outcome is not a deterministic function of the local controllables) or we need zero bits (if the outcome \emph{is} a deterministic function of the local controllables). However, recall that the ontic variable includes global variables, and such global variables may describe not only individual instances of such processes but also sequences of processes spread across spacetime. So let us consider a sequence of $n$ instances of the  process $P$ and ask how much information must be provided by the ontic variable to make the full sequence of processes deterministic - i.e. to specify which sequence of outcomes actually occurs?
 
 To answer this question, we may employ well-known results from information theory concerning the compression of sequences.\cite{CoverThomas} If $b$  is uniformly random, then the results cannot be compressed and therefore the amount of information needed to make the full sequence of processes deterministic is simply $n$. But if $b$ is not random then the sequence can be compressed, since certain sequences are more likely than others, and we can choose to encode the more likely sequences using shorter encodings. The Shannon source coding theorem tells us that as $n$ goes to infinity, the number of bits required to specify which sequence actually occurs with negligble probability of error is $n H( b ) $, where $H(b)$ is the Shannon entropy of the random variable $b$.\cite{CoverThomas} That is, the total amount of `memory' that the universe must set aside  per instance of this process in order to make the outcomes deterministic is given by $H( b ) $, which lies in the range $[0, 1]$. Thus we will henceforth use Shannon entropies $H( ...)$ and Shannon mutual informations $I( : )$ to characterise the information involved in various sorts of processes, even though these quantities will not usually be integers. These entropies and mutual informations are to be understood as describing averages over a large number of repetitions of the relevant process (i.e. the probabilities used to calculate them can be interpreted within a frequentist framework).

 We also need an assumption about how processes behave under composition. We will make the simple assumption that when two processes are composed, the outcome of the composed process is a function of the remaining local controllables together with the ontic variables for the individual processes:
 
 \begin{definition} 
 	\textbf{Global determinism under composition:} Consider two processes $e_i : i \in \{ 0, 1\}$, with local controllables $ N_i$ taking values in $\mathscr{N}_i$, ontic variables $\Omega_i $ taking values in $\mathscr{W}_i$, and outcomes $O_i$ taking values in $\mathscr{O}_i$, for which it is possible to construct a composition $e_{12}^m$ in which the subset of variables $O_1^m \subseteq O_1$ are used to fix the subset of variables $N_2^m \subseteq N_2$ and the subset of variables $O_2^m \subseteq O_2$ are used to fix the subset of variables $N_1^m \subseteq N_1$.
 	
 	\vspace{2mm}
 	
 	If the world is globally deterministic, there exists a function $f^{e_1, e_2, m}: \{ (\mathscr{N}_1 \setminus \mathscr{N}_1^m) \otimes (\mathscr{N}_2 \setminus \mathscr{N}_2^m) \otimes \mathscr{W}_1 \otimes \mathscr{W}_2 \} \rightarrow \mathscr{O}_1 \otimes \mathscr{O}_2 $, such that when $e_{12}^m$ is performed with the local controllables equal to $n_1 \in (\mathscr{N}_1 \setminus \mathscr{N}_1^m)$ and $n_2 \in (\mathscr{N}_2 \setminus \mathscr{N}_2^m)$, and the ontic variables equal to $\Omega_1\in \mathscr{W}_1$ and $\Omega_2 \in \mathscr{W}_2$, we obtain outcomes $o_1 \otimes o_2 \in \mathscr{O}_1 \otimes \mathscr{O}_2$ such that $   o_1 \otimes o_2 = f^{e_1, e_2, m} (n_1 \otimes n_2 \otimes \Omega_1 \otimes \Omega_2)$.
 	
 	\label{note1}
 	
 \end{definition}

 Crucially, this means that there is no extra `memory allocation' available to choose the outcomes for the composed process - we must be able to determine the outcomes of the composed process using the same ontic variables as we would have used to determine the outcomes of the two processes individually.

	\subsection{Allowed Loops in a Deterministic Universe \label{signal}}

In ref \cite{QMG} we proved that subject to the assumptions set out in section \ref{gd}, in a globally deterministic universe all spatiotemporally symmetric processes must be non-signalling. Loosely speaking, this is because the amount of information provided by the ontic variable to determine the outcomes of the process is complementary to the amount of information provided by the local controllables, so if outcome $O$ depends too strongly on the local controllable $I$, then under the composition shown in fig \ref{figsig}, there will not be enough information in the ontic variables to fully determine the outcomes of the composed process, so global determinism will fail.

This example is particularly simple because box $1$ is defined as to have no input, or only the trivial input. But in general local labelled regions will have local controllables, and from condition \ref{necessary}, if the world is globally deterministic then for any given set of non-local boxes we can obtain from the ontic variable $\Omega$ and the function $f_P$ a   function $f_P^{\Omega}$ which maps all possible local controllables to the values of the outcomes that will occur if these local controllables  are chosen as our inputs.  Consider then the case where we have a bipartite process $P$, with Alice and Bob  selecting local controllables $a, m$ respectively and obtaining outcomes $c, g$ respectively. The ontic variable $\Omega$ defines a function $f^{\Omega}_{P} $  from two inputs (the two local controllables $a, m$) to two outputs (the two  outcomes $c, g$). We can also use $f^{\Omega}_P$ to define a function $f^{\Omega}_A$ which specifies which outcomes Alice would have obtained for every possible  choice of the local controllables $(a, m)$  i.e. $f^{\Omega}_A$ is a function from the inputs $a, m$ to the output $c$. By conditioning on different possible values for $m$ we can obtain a family of functions $f^{\Omega;m}_A$  which specify which outcomes Alice would have obtained for every possible  choice of the local controllable $a$ conditional on the value of $m$  i.e. each $f^{\Omega;m}_A$ is a function from the input $a$ to the output $c$. Given any $f^{\Omega;m}_A$ we can define a counterfactual outcome for  Alice's box, $\vec{c} :  c_j = f^{\Omega;m}_A(j)$ for some integer labelling  $\{ j\}$ of the set of possible values for the   local controllable $a$ - that is to say, the counterfactual outcome for a given box is a mapping from that box's input to that box's output, which can be regarded as a random vector-valued variable that may have nontrivial correlations with the inputs to other boxes.  

Condition \ref{note1} tells us how counterfactual outcomes behave under composition: it must be possible to define the counterfactual outcomes for a composed process using only the `memory allocation' required to define the counterfactual outcomes for the individual processes. Thus we can proceed with a proof very similar to that in ref \cite{QMG}. The underlying idea is exactly the same as in the no-signalling proof:  the amount of information provided by the ontic variable to determine the outcomes of the process is complementary to the amount of information provided by the local controllables, so if the \emph{counterfactual} outcome $\vec{g}$ depends too strongly on the local controllable $a$, then under the composition shown in fig \ref{figsig2}, there will not be enough information in the ontic variables to fully determine the \emph{counterfactual} outcomes of the composed process, so global determinism will fail.

\begin{theorem} \label{parind}
	Let $P$ be a spatiotemporally symmetric process which can be modelled by a pair of non-local boxes $1$, $2$ such that box $1$ takes an input $m$ and produces an outcome $g$, and box $2$ takes an input $a$ and produces an outcome $c$. Let $f^{\Omega}_2$ be some fixed value for the function $f^{\Omega}_2 :  \{ (m, a) \} \rightarrow \{ c \}$ specifying the outcome  that will be obtained from box $2$ for every possible choice of local controllables $a, m$. Then if the universe is globally deterministic, after conditioning on $f^{\Omega}_2$ we must have $ I (\vec{ g} : a  ) = 0$, where $\vec{g}$ is  the counterfactual outcome $\vec{g}$ for box $1$. 
\end{theorem}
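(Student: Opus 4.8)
The plan is to argue by contradiction, paralleling the no-signalling proof of ref \cite{QMG} but tracking \emph{counterfactual} outcomes rather than realized ones. Suppose that, after conditioning on $f^\Omega_2$, we had $I(\vec g : a) > 0$, so that box $1$'s counterfactual response function $\vec g$ carries nonzero mutual information with box $2$'s local controllable $a$. I want to show this is incompatible with global determinism under composition (Definition \ref{note1}) once the loop of fig \ref{figsig2} is closed.

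First I would build the composition. Take two independent copies $X$ and $Y$ of the process $P$, with ontic variables $\Omega_X,\Omega_Y$; since these belong to distinct process instances they are independent, and hence the two box-$1$ counterfactual outcomes $\vec g_X$ and $\vec g_Y$ are independent as well. Using the spatiotemporal symmetry of $P$ — which is precisely what licenses permuting the spacetime locations so that the wiring is admissible — I would connect the boxes so that box $2$'s input $a$ in one copy is fixed by box $1$'s output $g$ in the other copy, and symmetrically box $1$'s input $m$ is fixed by box $2$'s output $c$ in the other copy, yielding a closed process $e_{12}^m$ with no remaining free external controllables. Conditioning on $f^\Omega_2$ in each copy pins box $2$'s behaviour entirely, so that $c$ is a fixed deterministic function of the loop variables and the only residual ontic freedom is carried by $\vec g_X$ and $\vec g_Y$.

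Next I would invoke Definition \ref{note1}: the counterfactual outcomes of $e_{12}^m$ must be a function of $\Omega_X$ and $\Omega_Y$ alone, with no extra memory allocation beyond that needed for the two processes individually. The decisive step is then an information count. Because the loop sets $a$ in each copy equal to the other copy's $g=\vec g(\cdot)$, a nonzero $I(\vec g : a)$ would force $\vec g_X$ and $\vec g_Y$ to become correlated around the loop; equivalently, specifying a self-consistent counterfactual assignment for the closed process would demand strictly more information than $H(\vec g_X)+H(\vec g_Y)$ can supply. This contradicts the independence of $\Omega_X$ and $\Omega_Y$, i.e.\ the fixed memory budget guaranteed by Definition \ref{note1}, forcing $I(\vec g : a)=0$.

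The hard part, I expect, is making this final count fully rigorous: I must translate the statement ``$\vec g$ depends on $a$'' into a genuine deficit in the ontic budget of the \emph{closed} loop, taking care that what the loop constrains is the whole counterfactual response function and not merely the realized outcome. This is exactly the subtlety separating counterfactual parameter independence from ordinary no-signalling, and it is where Definition \ref{note1} must be applied at the level of the vector-valued counterfactual variables $\vec g_X,\vec g_Y$ rather than single outcomes; I would also need the already-established non-signalling of $P$ to ensure the loop admits a well-defined deterministic closure in the first place.
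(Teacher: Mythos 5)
Your architecture matches the paper's: two copies of $P$, the loop of fig \ref{figsig2}, conditioning on $f^{\Omega}_2$, and a memory-budget argument grounded in Definition \ref{note1}. But there is a genuine gap exactly where you flag ``the hard part'': the decisive entropy count is absent, and the one quantitative claim you do make points the wrong way. You say a self-consistent counterfactual assignment for the closed loop ``would demand strictly more information than $H(\vec g_X)+H(\vec g_Y)$ can supply'' --- but subadditivity gives $H(\vec g_X \vec g_Y)\leq H(\vec g_X)+H(\vec g_Y)$ unconditionally: correlation around the loop \emph{lowers} the joint entropy to be supplied; it never pushes it above the sum, so no contradiction can arise in the form you state. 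The paper's contradiction instead runs through an information-complementarity identity you never invoke (Lemma \ref{lemmasum}): since $\vec g$ is a function of $a$ and $\Omega$, and $a$ and $\Omega$ are independent, $H(\vec g)=I(\vec g:a)+I(a\vec g:\Omega)$ --- the more the counterfactual outcome depends on $a$, the \emph{less} the ontic variable can say about it. Combining this with (i) $H(\vec g^x\vec g^y\mid \Omega^x\Omega^y)=0$ from Definition \ref{note1}, (ii) the wiring identity $I(\vec g^x:\vec g^y)=I(\vec g:a)$ (because $a^y$ is set equal to $\vec g^x$), and (iii) the bound $I(\vec g^x\vec g^y:\Omega^x\Omega^y)\leq 2I(a\vec g:\Omega)$, one gets $2H(\vec g)-I(\vec g:a)\leq 2H(\vec g)-2I(\vec g:a)$, hence $I(\vec g:a)\leq 0$ and so $I(\vec g:a)=0$. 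Without the complementarity step the ``fixed memory budget'' intuition cannot be closed into an inequality, so your sketch as written does not yield the theorem.

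Two further slips are worth repairing. First, you claim at the outset that $\vec g_X$ and $\vec g_Y$ are independent because $\Omega_X,\Omega_Y$ are; in the wired loop this is false (and inconsistent with your own later observation that nonzero $I(\vec g:a)$ correlates them): independence of the ontic variables yields only the bound (iii) above, not independence of the counterfactual outcomes, which are functions of the loop inputs as well. Second, your composition closes both directions ($g\to a$ \emph{and} $c\to m$), whereas the paper wires only $g\to a$ and disposes of $m$ not by feedback but by passing to the counterfactual vector $\vec g$, i.e.\ treating box $1$'s input as trivial (fig \ref{figsig3}), while conditioning on $f^{\Omega}_2$ makes the $c$ outputs trivial; your extra $c\to m$ feedback is unnecessary and leaves $m$ entangled with the loop in a way your sketch never analyzes. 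Likewise the appeal to ``already-established non-signalling'' is superfluous: the paper's proof uses only Definition \ref{note1}, the wiring, and classical Shannon identities.
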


	\begin{figure}
		\centering
		\begin{tikzpicture}[scale=0.7]

		\coordinate (a) at (0,0);
		\coordinate (b) at (0,2);
		\coordinate (c) at (2,2);
		\coordinate (d) at (2,0);

		\draw[red] (a) -- (b);
		\draw[red] (b) -- (c);
		\draw[red] (c) -- (d);
		\draw[red] (d) -- (a);

		\coordinate (e) at (4,0);
		\coordinate (f) at (4,2);
		\coordinate (g) at (6,2);
		\coordinate (h) at (6,0);

		\draw[black] (e) -- (f);
		\draw[black] (f) -- (g);
		\draw[black] (g) -- (h);
		\draw[black] (h) -- (e);
		
		\coordinate (i) at (4,4);
		\coordinate (j) at (4,6);
		\coordinate (k) at (6,6);
		\coordinate (l) at (6,4);

		\draw[red] (i) -- (j);
		\draw[red] (j) -- (k);
		\draw[red] (k) -- (l);
		\draw[red] (l) -- (i);
		
		\coordinate (m) at (0,4);
		\coordinate (n) at (0,6);
		\coordinate (o) at (2,6);
		\coordinate (p) at (2,4);

		\draw[black] (m) -- (n);
		\draw[black] (n) -- (o);
		\draw[black] (o) -- (p);
		\draw[black] (p) -- (m);

		\draw[red, ->] (1,2) -- (1,2.9);
		\draw[black, ->] (1,3.1) -- (1,4);
		
		\draw[black, ->] (5,2) -- (5,2.9);
		\draw[red, ->] (5,3.1) -- (5,4);
		
		\draw[black, ->] (1,6) -- (1,7.5);
		\draw[red, ->] (5,6) -- (5,7.5);
		
		\draw[red, ->] (1,-1.5) -- (1,0);
		\draw[black, ->] (5,-1.5) -- (5,0);
		
		\node[below ] at (0.5,7) {$c^y$};
		\node[below] at (5.5,-0.5) {$m^y$};
		\node[below, red] at (0.5,-0.5) {$m^x$};
		\node[below, red] at (5.5,7) {$c^x$};

		\node[below, red] at (0.5,2.8) {$g^x$};
		\node[below] at (5.5,2.8) {$g^y$};
		\node[below] at (0.5,3.8) {$a^y$};
		\node[below, red] at (5.5,3.8) {$a^x$};
		
		\draw[red, <->, line width=2pt] (c) -- (i);
		\draw[black, <->, line width=2pt] (p) -- (f);

		\node[below] at (1,1.25) {$X_1$};
		\node[below] at (5,5.25) {$X_2$};
		\node[below] at (1,5.25) {$Y_2$};
		\node[below] at (5,1.25) {$Y_1$};
		\end{tikzpicture}	
		\caption{Schematic diagram of the composition of two processes used in the proof of theorem \ref{parind}} 
		\label{figsig2}
	\end{figure}
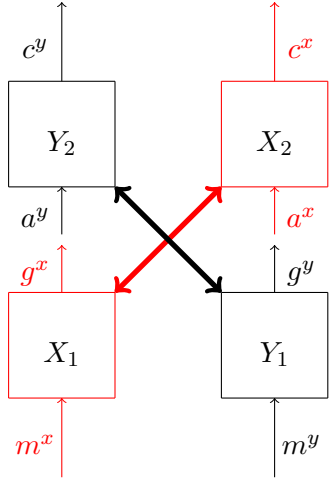

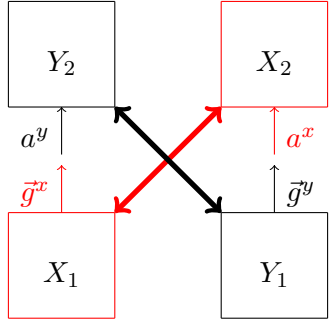
\begin{figure}
	\centering
	\begin{tikzpicture}[scale=0.7]

	\coordinate (a) at (0,0);
	\coordinate (b) at (0,2);
	\coordinate (c) at (2,2);
	\coordinate (d) at (2,0);

	\draw[red] (a) -- (b);
	\draw[red] (b) -- (c);
	\draw[red] (c) -- (d);
	\draw[red] (d) -- (a);

	\coordinate (e) at (4,0);
	\coordinate (f) at (4,2);
	\coordinate (g) at (6,2);
	\coordinate (h) at (6,0);

	\draw[black] (e) -- (f);
	\draw[black] (f) -- (g);
	\draw[black] (g) -- (h);
	\draw[black] (h) -- (e);
	
	\coordinate (i) at (4,4);
	\coordinate (j) at (4,6);
	\coordinate (k) at (6,6);
	\coordinate (l) at (6,4);

	\draw[red] (i) -- (j);
	\draw[red] (j) -- (k);
	\draw[red] (k) -- (l);
	\draw[red] (l) -- (i);
	
	\coordinate (m) at (0,4);
	\coordinate (n) at (0,6);
	\coordinate (o) at (2,6);
	\coordinate (p) at (2,4);

	\draw[black] (m) -- (n);
	\draw[black] (n) -- (o);
	\draw[black] (o) -- (p);
	\draw[black] (p) -- (m);

	\draw[red, ->] (1,2) -- (1,2.9);
	\draw[black, ->] (1,3.1) -- (1,4);
	
	\draw[black, ->] (5,2) -- (5,2.9);
	\draw[red, ->] (5,3.1) -- (5,4);

	\node[below, red] at (0.5,2.8) {$\vec{g}^x$};
	\node[below] at (5.5,2.8) {$\vec{g}^y$};
	\node[below] at (0.5,3.8) {$a^y$};
	\node[below, red] at (5.5,3.8) {$a^x$};
	
	\draw[red, <->, line width=2pt] (c) -- (i);
	\draw[black, <->, line width=2pt] (p) -- (f);

	\node[below] at (1,1.25) {$X_1$};
	\node[below] at (5,5.25) {$X_2$};
	\node[below] at (1,5.25) {$Y_2$};
	\node[below] at (5,1.25) {$Y_1$};
	\end{tikzpicture}	
	\caption{Simpler cyclic process after conditioning on a fixed value of the functions $f^{\Omega}_{ X_2}$,  $f^{\Omega'}_{ Y_2}$ and considering the counterfactual outcome $\vec{g}$ used for the proof of theorem \ref{parind}} 
	\label{figsig3}
\end{figure}

	\begin{proof}

		Since process $P$ is spatiotemporally symmetric, we can create two copies of this process: let the first copy be implemented by boxes  $X_1, X_2$ and the second instance y boxes $Y_1, Y_2$. Then these copies of $P$ can be used to construct a cyclic process as in figure \ref{figsig2}. Here, Alice takes boxes $X_1, Y_2$ to one spatiotemporal location and Bob takes $Y_1, X_2$ to a different location. At some time, Alice inputs a randomly chosen variable $m^x$ into $X_1$, obtains an outcome $g^x$, and then uses $g^x$ as the input $a^y$ to box $Y_2$, obtaining another outcome $c^y$. Likewise, at some time Bob inputs a randomly chosen variable $m_y$ into $Y_1$, obtains an outcome $g^y$, and then uses $g^y$ as the input $a^x$ to box $Y_2$, obtaining another outcome $c^x$.
		\vspace{2mm}
		 
		 We assume that the world is globally deterministic, so for each of these pairs of boxes there exists a function $f^{\Omega}_{X_1 X_2}$,  $f^{\Omega'}_{Y_1, Y_2}$ mapping values of the local controllables $a, m$ to outcomes $c, g$. From these functions we can define the functions $f^{\Omega}_{ X_2}$,  $f^{\Omega'}_{ Y_2}$ mapping values of the local controllables $a, m$ to the outcome $c$ for box $2$. Let us   select a subset of instances of these processes for which both of these functions $f^{\Omega}_{ X_2}$,  $f^{\Omega'}_{ Y_2}$ take some fixed form -  note that there's no need for them both to be the same. Within this set of processes the outputs of boxes $X_2, Y_2$ are  trivial and therefore we will  henceforth ignore them. We are interested in studying the behaviour of the counterfactual outcomes $\vec{g}^x$  and $\vec{g}^y$, and therefore we may also take it that the input to box $1$ is trivial (i.e. it is always the counterfactual input where we select all the local controllables at once). We then have a simpler cyclic process, as shown in figure \ref{figsig3}.
		
			\vspace{2mm} 
		
	From the composition rule (\ref{note1}), we have $H(\vec{g}^x \vec{g}^y | \Omega^x \Omega^y) = 0$, and hence $H( \vec{g}^x \vec{g}^y) = I(\vec{g}^x \vec{g}^y   :\Omega^x \Omega^y) $.

		\vspace{2mm} 
		
		From the definition of the mutual information, $H( \vec{g}^x \vec{g}^y ) =  H(\vec{g}^x) + H(\vec{g}^y)  - I( \vec{g}^x : \vec{g}^y)$ 
		
		\vspace{2mm} 
		In this construction, $I(\vec{g}^x : \vec{g}^y ) = I(\vec{g}^y : a^y) = I(\vec{g} : a)$, so $H(\vec{g}^y  \vec{g}^x) = 2H(\vec{g}) - I(\vec{g} : a)$ 
		
		\vspace{2mm} 
		
		Since $\vec{g}^x$ is a function of $a^x$ and $\Omega^x$  and $\vec{g}^y$ is a function of $a^y$ and $\Omega^y$, $I(\vec{g}^y  \vec{g}^x :\Omega^x \Omega^y) \leq I(a^y \vec{g}^y  : \Omega^y ) + I(a^x \vec{g}^x : \Omega^x) = 2 I (a \vec{g} : \Omega )$.
		
		\vspace{2mm} 
		
		Since $\vec{g}$ is a function of $a$ and $\Omega$, and $a$ and $\Omega$ are independent,  using lemma \ref{lemmasum} (see appendix \ref{lemma} for details) we have  $H(\vec{g} ) =  I(\vec{g} : a)  + I(a \vec{g} : \Omega)$. 
		\vspace{2mm} 
		
		Combining these results, we obtain: 
		
		\begin{equation}2 H(\vec{g}) - I(\vec{g} : a) \leq 2 H(\vec{g}) - 2 I (\vec{g} : a) \end{equation} 
		
		\vspace{3mm} 
		
		Hence $I(\vec{g} : a) \leq 0 $. But the Shannon mutual information is nonnegative, so  $I(\vec{g} : a) = 0$.  
		
		\end{proof}

The condition set out in  theorem \ref{parind} tells us that  that, conditional on a fixed choice of the function $f^{\Omega}_2$, which determines the output for box $2$ conditional on the inputs to both boxes, the counterfactual outcome $\vec{g}$ for box $1$ will never depend on which input is actually chosen for box $1$ - i.e. if we know which outcomes box $2$ will produce for any choice of inputs $a, m$, we can write the  output $g$ to box $1$   as a function of the input $m$ to box $1$ only, without any dependence on the input $a$ to box $2$.  This condition can therefore be understood as an analogue of parameter independence for counterfactual outcomes, so we will refer to it as   \emph{counterfactual parameter independence}. (See Appendix \ref{Bell} for an explanation of this terminology).

But wait - doesn’t this mean that the function $f^{\Omega}_2$ and the counterfactual outcome $\vec{g}$ could be used as a hidden variable $\lambda$ in the proof of Bell’s theorem, and therefore processes satisfying counterfactual parameter independence could never violate a Bell’s inequality? No, because although we have shown that $\vec{g}$ is independent of $a$ for a fixed value of $f^{\Omega}_2$, the function $f^{\Omega}_2$ determines the output of box $2$ conditional on both $a$ and $m$ , so the hidden variable $f^{\Omega}_2, \vec{g}$ does not provide enough information to determine the output of box $2$ without information about the  choice of input to box $1$. Therefore processes satisfying counterfactual parameter independence can still violate a Bell's inequality. 

But wait - here we are post-selecting on the output of box $2$. And it has been shown that if we allow post-selection then quantum mechanics does not even obey \emph{no-signalling}, so surely quantum mechanics with post-selection can’t obey counterfactual parameter independence, which is a stronger requirement than no-signalling? However this reasoning is incorrect, because in real experiments we can condition on at most one outcome, i.e. the outcome  for the measurement that Alice actually performs. Thus we can't post-select `the set of processes in which Alice would have obtained outcome $0$ if she had measured in basis $b$,' we can only post-select `the set of processes in which Alice did in fact measure in basis $b$ and she obtained outcome $0$.' So in ordinary post-selection we are conditioning in part on information about \emph{which measurement Alice actually makes} - no wonder it’s possible to violate no-signalling under these circumstances! Whereas the `post-selection’ involved in counterfactual parameter independence involves conditioning on a function specifying what outcome Alice \emph{would have} obtained to all of the possible measurements both she could have made, so this variable contains no information about which measurement Alice actually \emph{does} make. Obviously this is not a post-selection that we can ever actually perform in real life, because we can never know what result would have been obtained to all the counterfactual measurements that Alice did not actually make and therefore we can never select all those cases where the function $f^{\Omega}_{ A}$ has a certain value. However, if it is assumed that the world is globally deterministic, then there is always a fact of the matter about the function $f^{\Omega}_A$ for each instance of the process, and therefore the set of  instantiations of the process where the function $f^{\Omega}_A$ takes a certain form is well-defined, even if we ourselves can never identify that set. 
 
In the next section we will derive the Tsirelson bound from counterfactual parameter independence, but first it is important to show that quantum mechanics  satisfies counterfactual parameter independence, since we don't want to impose a constraint so strict that it rules out quantum mechanics itself! Of course, counterfactual parameter independence is a property of counterfactual outcomes, and standard quantum mechanics does not say anything about counterfactual outcomes, so we can't demonstrate this directly. Rather, we will demonstrate that for any bipartite quantum system, it is always possible to find a distribution over values of the counterfactual outcomes which correctly reproduces the quantum statistics and obeys counterfactual parameter independence.  This is done by choosing a temporal ordering for the measurements within the standard temporally ordered quantum formalism: we first `measure' one box and thus constructing a counterfactual outcome $\vec{g}$ for that box which is necessarily independent of the input to the second box, and then we subsequently `measure' the other box to create the function $f^{\Omega}_2$ mapping both inputs to the output for the second box. So if in fact there do exist definite counterfactual outcomes for quantum systems, they could obey counterfactual parameter independence. 

\begin{theorem} \label{qs}
	Whenever a quantum system can be modelled as a process $P$ involving a pair of non-local boxes	$1$, $2$ such that box $1$ takes an input $m$ and produces an outcome $g$, and box $2$ takes an input $a$ and produces an outcome $c$, it is always possible to construct  a distribution over values of the function $f^{\Omega}_P : \{ (m, a) \} \rightarrow \{ (g, c) \}$, which reproduces the quantum statistics and also obeys counterfactual parameter independence

	\end{theorem}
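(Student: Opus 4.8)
The plan is to build the required distribution explicitly by committing to a fixed temporal order for the two boxes within the standard quantum formalism, and then reading off counterfactual outcomes sequentially. Although the process $P$ is spatiotemporally symmetric, nothing prevents us from modelling it using an ordered sequence of measurements, since for local measurements on a bipartite state the joint statistics $p(g,c\mid m,a) = \mathrm{Tr}\!\left[(\Pi^m_g \otimes \Pi^a_c)\,\rho\right]$ are independent of the order in which the two measurements are carried out. I would therefore stipulate that box $1$ is ``measured'' first and box $2$ second, which immediately makes box $1$'s marginal statistics $p(g\mid m) = \mathrm{Tr}\!\left[(\Pi^m_g \otimes I)\,\rho\right]$ independent of the input $a$ to box $2$ — this is just the no-signalling property, now read in a definite causal direction.

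First I would construct the counterfactual outcome $\vec{g}$ for box $1$. Introducing a hidden variable $\lambda_1$, independent of the inputs and distributed so that for each value of $m$ the induced marginal over $g$ reproduces the Born-rule distribution $p(g\mid m)$, I define $\vec{g}$ as the random function $m \mapsto g_m$ determined by $\lambda_1$ and the reduced state $\rho_1$. Because this construction never refers to $a$, the random variable $\vec{g}$ is manifestly independent of $a$. Next I would build the response function $f^{\Omega}_2$ for box $2$: conditioned on box $1$'s input $m$ and its (counterfactual) outcome $g_m = \vec{g}(m)$, the Lüders rule fixes the updated state $\rho_2^{\,m,g_m}$ of system $2$, and a second independent hidden variable $\lambda_2$ de-randomises the Born-rule distribution $p(c\mid a,m,g_m) = \mathrm{Tr}\!\left[\Pi^a_c\,\rho_2^{\,m,g_m}\right]$ into a deterministic assignment $c = f^{\Omega}_2(m,a)$. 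Taking $\Omega = (\lambda_1,\lambda_2)$ then yields a genuine deterministic function $f^{\Omega}_P:(m,a)\mapsto(g_m, f^{\Omega}_2(m,a))$, together with a distribution over such functions induced by the distribution over $(\lambda_1,\lambda_2)$.

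It then remains to verify the two required properties. That the quantum statistics are reproduced follows because, for the actually chosen inputs $(m,a)$, averaging over $\Omega$ gives $p(g,c\mid m,a) = p(g\mid m)\,p(c\mid a,m,g)$, and for projective measurements on separate subsystems this sequential product equals $\mathrm{Tr}\!\left[(\Pi^m_g \otimes \Pi^a_c)\,\rho\right]$, the joint distribution of the original process. Counterfactual parameter independence then follows almost for free: both $\vec{g}$ and $f^{\Omega}_2$ are functions of the ontic variable $\Omega = (\lambda_1,\lambda_2)$ alone, and $\Omega$ is independent of the freely chosen input $a$; hence $(\vec{g}, f^{\Omega}_2)$ is jointly independent of $a$, which gives $I(\vec{g} : a \mid f^{\Omega}_2) = 0$ as required by Theorem \ref{parind}.

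The step I expect to need the most care is the verification that conditioning on a fixed value of $f^{\Omega}_2$ does not reintroduce a dependence of $\vec{g}$ on $a$. Since $f^{\Omega}_2$ is built from $\rho_2^{\,m,g_m}$ it is genuinely correlated with $\vec{g}$ through $\lambda_1$, so one might worry that post-selecting on $f^{\Omega}_2$ induces a spurious $\vec{g}$--$a$ correlation. The resolution is that correlation with $\vec{g}$ is harmless: because the pair $(\vec{g}, f^{\Omega}_2)$ is a deterministic function of $\Omega$ and $\Omega \perp a$, joint independence of $a$ is preserved under conditioning on $f^{\Omega}_2$, so the conditional mutual information vanishes. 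I would also flag, as a minor technical point, the need to exhibit an explicit coupling of the marginals $\{p(g\mid m)\}_m$ into a joint law for $\vec{g}$ (any coupling suffices, for instance taking the $g_m$ independent across $m$) and, if one wishes to treat general POVMs rather than projective measurements, to replace the Lüders update with a suitable instrument so that the sequential product still recovers the joint statistics.
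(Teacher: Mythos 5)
Your proposal is correct and follows essentially the same route as the paper's own proof: fix a temporal ordering, draw the counterfactual outcome $\vec{g}$ for box $1$ from the Born-rule marginals (manifestly independently of $a$), then generate box $2$'s response function $f^{\Omega}_2$ from the post-measurement (L\"uders) states, so that sequential sampling reproduces the joint quantum statistics. Your explicit de-randomisation via $\Omega=(\lambda_1,\lambda_2)$ and the argument that joint independence of $(\vec{g},f^{\Omega}_2)$ from $a$ survives conditioning on $f^{\Omega}_2$ merely make rigorous the paper's informal closing remark that the value of $a$ has not yet been selected, so the conditional distribution over $\vec{g}$ cannot depend on it.
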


\begin{proof}

Consider a quantum system $\mathbb{X}_1 \mathbb{X}_2$ in the joint state $\rho$. We will perform a measurement in the set $\{ m_0, m_1 ...  \}$ on $\mathbb{X}_1$, obtaining an outcome in the set $\{ g_0, g_1 ... \}$; likewise we will perform a measurement in the set $\{ a_1, a_2 ... \}$ on $\mathbb{X}_2$, obtaining an outcome in the set $\{ c_0, c_1 ...\}$. This quantum system can be modelled as a pair of boxes $1$, $2$  such that box $1$ takes an input $m$ and produces an outcome $g$, and box $2$ takes an input $a$ and produces an outcome $c$.

\vspace{2mm} 

Consider an ensemble of such systems. We generate the functions  $f^{\Omega}_P$ for the ensemble as follows. First, for each instance of the process, for each possible measurement $m_i$ on $\mathbb{X}_1$, we generate an outcome $g^x$ by drawing from the possible outcomes according to the quantum probability distribution  $p(g^x) = Tr(  (G_i^x \otimes  \mathbb{I}_{\mathbb{X}_2} )   \rho )$, where  $G_i^x$ is the quantum measurement operator associated with obtaining outcome $g^x$ to the measurement $m_i$. Thus we have generated a counterfactual outcome $\vec{g}$ for the box $\mathbb{X}_1$ which is independent of $a$. 

Then, for each combination of measurements $m_i, a_j$ we generate an outcome $c^y$ for the measurement $a_j$ by first generating the post-measurement state for $\mathbb{X}_1$ associated with the chosen outcome $g^x$ of the measurement $m_i$ on $\mathbb{X}_1$, which is given by $\rho' =  Tr_{\mathbb{X}_2} (\frac{K_j^y \rho (K_j^y)^{\dagger} }{Tr( (G_i^x \otimes  \mathbb{I}_{\mathbb{X}_2} )   \rho ))})$ where $K_i^j$ is the Kraus operator associated with the operator $G_i^x \otimes  \mathbb{I}_{\mathbb{X}_2}$. Then we draw $c^y$  from the set of possible outcomes according to the quantum probability distribution $p(c^y) = Tr( C_j^y \rho')$ where $C_j^y$ is the quantum measurement operator associated with obtaining outcome $c^y$ to the measurement $a_j$. Clearly averaging over outcomes selected in this way will reproduce the quantum statistics.  

\vspace{2mm} 

Now for each member of the ensemble, we can use the function $f^{\Omega}_P$ to write down the function $f^{\Omega}_2$  mapping the inputs $a, m$ to an output $c$ for box $2$, and then we can select the subset of the ensemble where the function $f^{\Omega}_2$ is equal to some fixed function.  Since the procedure we have defined reproduces the usual quantum statistics, we can make any choice for the function $f^{\Omega}_2$ which assigns outcomes that are possible according to standard quantum mechanics. Then we can calculate the relative frequencies of the different values for $\vec{g}$ occurring within this subset of the ensemble, and thus define a probability distribution over the values of $\vec{g}$ which will reproduce the correct quantum statistics. We have not yet selected the actual value of the input $a$, and hence it is clear that this probability distribution does not depend on the value of $a$, i.e. conditional on any possible choice of $f^{\Omega}_2$ we will have $I(\vec{g} : a)$.

	\end{proof}

 \section{The Tsirelson Bound \label{Tsi}} 
 
In section \ref{entropy} we argued that information causality and the Tsirelson bound should be explained in a way that makes it clear why quantum systems sometimes behave like classical variables despite the existence of inequivalent measurements. We now see that global determinism can provide this explanation, because even though it is not possible to actually perform all measurements at once, in a globally deterministic universe the counterfactual outcome must still be well-defined and therefore it can be treated as a classical variable. Global determinism thus explains why the universe `cares' about counterfactual measurements as well as actual ones.

 In this spirit, we now show that information causality can be derived from counterfactual parameter independence. It follows that in a globally deterministic universe, information causality must always be obeyed, and since the Tsirelson bound follows from information causality we have shown that global determinism implies the Tsirelson bound.

 \begin{theorem}\label{ic}
 	For a spatiotemporally symmetric process $P$ which can be modelled by a pair of non-local boxes $1$, $2$ such that box $1$ takes an input $m$ which is an integer in the range $[ 0, n)$ and produces a one-bit outcome $g$, and box $2$ takes an input $A = a_0 a_1 ... a_{n - 1}$ which is a bit string of length $n$ and produces an outcome $c$, if the world is globally deterministic we must have:  
 	$ \sum_x I(g c : a_x | m = x) \leq H(c) $
 	\end{theorem}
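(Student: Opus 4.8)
The plan is to reduce the claimed inequality to a statement of purely classical information theory by exploiting counterfactual parameter independence, and then to run a standard chain-rule argument while treating the counterfactual outcome $\vec{g} = (g_0, \dots, g_{n-1})$ of box $1$ as an ordinary classical vector. First I would condition on a fixed value of the function $f^{\Omega}_2$, so that by Theorem \ref{parind} we have $I(\vec{g} : A) = 0$; that is, the entire vector of outcomes Bob could counterfactually obtain for his $n$ possible queries is independent of Alice's string $A$. The conceptual point stressed in Section \ref{entropy} is that global determinism guarantees $\vec{g}$ exists as a genuine random variable even though Bob can only ever read out the single component $g_x$ selected by his actual query $m=x$; this is exactly what licenses the classical manipulations. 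I would also use no-signalling, which holds for spatiotemporally symmetric processes in a globally deterministic universe, to treat the message $c$ as carrying no information about $m$, so that a single variable $c$ appears in every term of the sum rather than a distinct $m$-dependent message.

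With $f^{\Omega}_2$ fixed, the core is a four-step chain. (i) Since the actual readout $g$ at query $m=x$ is the component $g_x$, which is a function of $\vec{g}$, data processing gives $I(gc : a_x \mid m=x) \le I(\vec{g}\,c : a_x)$. (ii) Because Alice's bits $a_x$ are independent, the chain rule together with $I(Y : a_x \mid a_{<x}) \ge I(Y : a_x)$ for $a_x \perp a_{<x}$ yields $\sum_x I(\vec{g}\,c : a_x) \le I(\vec{g}\,c : A)$. (iii) Splitting off $\vec{g}$, $I(\vec{g}\,c : A) = I(\vec{g} : A) + I(c : A \mid \vec{g})$, and counterfactual parameter independence kills the first term. (iv) The survivor is bounded by an entropy, $I(c : A \mid \vec{g}) \le H(c \mid \vec{g}) \le H(c)$. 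Chaining these delivers the claimed bound for each fixed value of $f^{\Omega}_2$.

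The remaining and most delicate step is to pass from this per-$f^{\Omega}_2$ bound to the unconditional inequality, since conditioning on $f^{\Omega}_2$ alters both sides and conditional mutual information has no free direction. Here I would lean on the independence of the local controllables from the ontic variable: since $A \perp \Omega$ we have $a_x \perp f^{\Omega}_2$, and the identity $I(gc : a_x \mid m=x, f^{\Omega}_2) - I(gc : a_x \mid m=x) = I(a_x : f^{\Omega}_2 \mid gc, m=x) \ge 0$ shows that conditioning on $f^{\Omega}_2$ can only \emph{increase} the left-hand side, whereas it can only \emph{decrease} the right-hand side via $H(c \mid f^{\Omega}_2) \le H(c)$. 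Averaging the fixed-$f^{\Omega}_2$ bound over the distribution of $f^{\Omega}_2$ and combining these two monotonicities then yields $\sum_x I(gc : a_x \mid m=x) \le H(c)$. I expect this averaging step, rather than the classical chain itself, to be the main obstacle, since it is precisely where the physical assumption that local experimenters cannot influence the ontic variable does its essential work.
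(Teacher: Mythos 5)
Your proposal follows essentially the same route as the paper's proof: condition on a fixed value of $f^{\Omega}_2$, invoke Theorem \ref{parind} to kill $I(\vec{g}:A)$, replace the actual readout $g$ at $m=x$ by the component $g_x$ of the counterfactual vector via data processing, and assemble the per-bit terms into $I(\vec{g}\,c:A) = I(\vec{g}:A) + I(c:A\,|\,\vec{g}) \leq H(c)$ using the chain rule and the independence of Alice's bits (the paper runs this as a proof by contradiction for two bits and then adds further terms, but the content is identical). Your de-conditioning step --- the identity $I(gc:a_x\,|\,m{=}x,f^{\Omega}_2) - I(gc:a_x\,|\,m{=}x) = I(a_x : f^{\Omega}_2\,|\,gc, m{=}x) \geq 0$, valid because $a_x$ is independent of $f^{\Omega}_2$, together with $H(c\,|\,f^{\Omega}_2) \leq H(c)$ --- is correct and is in fact more careful than the paper, whose proof silently treats the unconditional hypothesis as if it held inside the conditioned ensemble.

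There is, however, a genuine gap precisely where you lean on no-signalling. No-signalling constrains only statistics averaged over the ontic variable: for a fixed deterministic $f_0$, the message $c = f_0(m,A)$ can depend nontrivially on $m$, with the dependence washing out only on average over $\Omega$. So at fixed $f^{\Omega}_2 = f_0$ you cannot in general treat $c$ as a single $m$-independent variable. The paper handles this by fiat, conditioning only on a special $f_0$ in which $c$ depends on $a$ alone; but your averaging argument needs the per-$f_0$ bound $\sum_x I(gc:a_x\,|\,m{=}x,f_0) \leq H(c\,|\,f_0)$ for \emph{every} $f_0$ in the support, including $m$-dependent ones. For those, your chain delivers at best $I(\vec{g}\,\vec{c}:A\,|\,f_0) \leq H(\vec{c}\,|\,f_0)$ with $\vec{c} = (f_0(0,A),\dots,f_0(n-1,A))$, and $H(\vec{c})$ can exceed $H(c)$, so the needed bound is not established. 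Conversely, restricting the average to the special $m$-independent values of $f_0$ breaks your monotonicity identity, which holds only as an average over \emph{all} values of $f^{\Omega}_2$. To close the argument you would need either a proof that every $f_0$ in the support satisfies the per-$f_0$ bound, or a proof that conditioning on the $m$-independent sub-ensemble cannot decrease the left-hand side; neither is supplied --- nor, it should be said, by the paper itself, whose proof shares exactly this difficulty, so you have correctly isolated the step where the real work lies.
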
  
 
(Note that although this formulation looks highly specific, in fact any two-box process can be put in this form, as we can always ignore some bits if an outcome is too small or an input too large, and we can always repeat bits if an input is too small or an outcome too large.)

 \begin{proof}

 For the purpose of obtaining a contradiction, suppose that the world is globally deterministic, and that we have a spatiotemporally symmetric process $P$ which can be modelled as a pair of non-local boxes of the specified form satisfying:
 
 \[ I( g c : a_0 | m = 0) + I(g c : a_1 | m = 1) > H(c) \] 
 
As in the definition of counterfactual parameter independence, we condition on a fixed form for the function $f$ which maps the inputs $a, m$ to the output $c$; let us choose the function such that $c$ depends only on $a$ and not on $m$, so $c$  can be treated as a fixed classical variable. Then $I(g c : a_0 | m = 0) \leq I(\vec{g} c : a_0 )$ since $\vec{g}$ includes the outcome of the measurement $m = 0$, and likewise $I(g c : a_1 | m = 1) \leq I(\vec{g} c : a_1)$, so we have:

\begin{align*} I( g c : a_0 | m = 0) + I(g c : a_1 | m = 1) 
\\ \leq  I( \vec{g} c : a_0 ) + I(\vec{g} c : a_1 )  \\
 \leq  I(\vec{g} c : a_0 )  + I(\vec{g} c a_0 : a_1  )  \\
 = I(\vec{g} c : a_0 ) + I(\vec{g} c : a_1  | a_0) + I(a_0 : a_1) 
\\   = I(\vec{g} c : a_0 a_1 )  + I(a_0 : a_1)  
\\= I(\vec{g} c : a_0 a_1 ) 
\\= I (\vec{g} c : A )\\ \end{align*}

where at the third line we have used the strong subadditivity of the Shannon mutual information, and at the fourth and fifth lines we have used the chain rule for the Shannon mutual information. Recall that  $c, \vec{g}$ and $a_0, a_1$ are all simply classical variables, so these classical rules are valid here. At the sixth line we have used the fact that $a_0$ and $a_1$ are independent.

Combining these results and using the chain rule again, we infer that: 

 \[ I(c :A | \vec{g} ) +  I( \vec{g} :A ) > H(c) \]

But from the definition of the mutual information and the non-negativity of the (classical) conditional entropy, we have that $I(c : A | \vec{g}) < H(c)$. So we conclude that, for fixed $c$: 

\[ I( \vec{g} : A) > 0 \] 

That is to say, counterfactual parameter independence is violated, and from theorem \ref{parind}, this means that the world is not globally deterministic. 

We have obtained a contradiction;  so if the world is globally deterministic, every spatiotemporally symmetric process of this form must satisfy: 
 
  \[ I( g c : a_0 | m = 0) + I(g c : a_1 | m = 1) \leq H(c) \] 
  
  We can then proceed to add further terms $I( g c : a_x | m = x)$ and for each term follow the same reasoning to finally obtain: 
  
    \[ \sum_x  I( g c : a_x | m = x)  \leq H(c) \]

\end{proof}

\subsection{Example: PR boxes} 

The PR boxes are a set of non-local boxes proposed by Popescu and Rohrlich which exhibit correlations that are non-signalling but also not allowed by quantum mechanics. In particular, the PR boxes violate information causality. Here we will use the PR boxes to give an example of the connection between the violation of counterfactual parameter independence and the violation of information causality.

The PR boxes are a pair of boxes $1$, $2$ taking bit valued inputs $m, a$ respectively and producing bit valued outcomes $g, o_2$ respectively. The results are guaranteed to satisfy the equality $g \oplus o_2 = m a$ (using bitwise addition). To violate information causality with these boxes, we convert a two bit input $A = (a_0, a_1)$ into a one bit input $a = a_0 \oplus a_1$ and use it as the input  to box $2$. We then add $a_0$ to the output $o_2$ of box $2$, producing a new outcome $c$. Then if $m = 0$ the results are guaranteed to satisfy $g \oplus c = a_0$ and if $m = 1$ the results are guaranteed to satisfy $g \oplus c = a_1$ - i.e. we have $g \oplus c = a_{m}$. Then we have $\sum_x I( a_x : g c | m = x ) = 2 > H(c) = 1$, so information causality is indeed violated.

It is straightforward to see that no possible distribution over counterfactual outcomes for the PR boxes could ever satisfy counterfactual parameter independence.  To see this, let us condition on a fixed function $f^{\Omega}_2$ for our second box. In particular, let us choose a function where the output $c$ depends only on $a$ and not on $m$, i.e. for fixed $a$ we will get the same $c$ for any value of $m$. Note that for fixed $a_0, a_1, c$ we have that $g_{m = 0} = a_0 \oplus c$ and $g_{m = 1} = a_1 \oplus c$, and therefore if it were possible to know both $g_{m = 0} $ and $g_{m = 1} $, whenever we found it to be the case that $g_{m = 0}  \oplus g_{m = 1}  = 0$ we could infer that $a_0 = a_1$. This demonstrates that after conditioning on this fixed function $f^{\Omega}_2$, the counterfactual outcome $\vec{g}$ for the first box can't be independent of the input $a = a_0, a_1$ to the second box.  

It may be helpful to see why the proof of theorem \ref{qs} does not apply to the PR boxes. With a set of PR boxes, we can of course follow the  procedure set out in the proof, generating a counterfactual outcome for box $1$ and then constructing the  function $f^{\Omega}_2$  for box $2$ conditional on choice of counterfactual outcome for the first box. However, for certain choices of the counterfactual outcome for the box $1$, some possible choices for the function $f^{\Omega}_2$ will be impossible, even though they represent physically possible outcomes. For example, if we select a counterfactual outcome $\vec{g}$ such that $g_{m = 0}  \neq g_{m = 1} $ then we cannot select a function $f^{\Omega}_2$ which maps $ a = x y, m = 0 \rightarrow c = 0$ and $a = x y, m = 1 \rightarrow c = 0$ for any $x, y$, even though this combination of inputs and outputs is  physically possible,  because in that case we would not have both  $g_{m = 0} = a_0 \oplus c$ and $g_{m = 1} = a_1 \oplus c$. Thus the construction of theorem \ref{qs} can't be applied in this case because averaging over its results would not reproduce the expected joint statistics for the PR boxes.

\section{Monogamy of Correlations \label{monogamy}}

Quantum mechanics has a property known as `monogamy of entanglement,' which refers to the fact that the amount of entanglement a quantum system has with a second system limits the amount of entanglement it can have with any third system - in the most extreme case, a quantum system which is maximally entangled with a second quantum system cannot be entangled at all with any other quantum system. But entanglement is a property of quantum states, usually quantified using state-dependent measures like the `concurrence' (or `tangle')\cite{Koffmanetal}, and these measures do not have a   straightforward physical interpretation. On the other hand, we can also characterize this feature in operational language by talking about the `monogamy of correlations,' which is a property of certain theories in which the strength of the correlations  that a system has with a given second system limits the strength of its correlations   with any third system. The monogamy of correlations is defined purely in terms of the $CHSH$ quantity, which is entirely constructed from observable statistics for measurements on a pair of systems $A, B$, so it is easily expressed in purely operational language. This allows us to derive general statements about the monogamy of correlations in various possible theories.

 For example, it has been shown that any operationally defined theory which does not allow superluminal signalling  must obey the following  monogamy bound for measurements on three systems $1$, $2$ and $3$: \cite{Toner}

\begin{equation}
\label{NS}
CHSH_{13} + CHSH_{23}  \leq 4   \end{equation} 

The CHSH quantities used here are as defined in section \ref{entropy}. Note that this bound holds only if the two measurements on system $3$ used to calculate $CHSH_{13}$ are the same as the two measurements on $3$ used to calculate $CHSH_{23}$. 

It has also been shown that  all \emph{quantum} correlations obey the stronger monogamy bound\cite{Toner2} (again, using the same two measurements on system $3$ for both CHSH quantities): 

\begin{equation}
\label{Q}
CHSH_{13}^2   + CHSH_{23}^2  \leq 8  \end{equation}

The gap between the no-signalling monogamy bound and the quantum monogamy bound is reminiscent of the gap between the no-signalling bound and the Tsirelson bound for the CHSH quantity, which suggests the possibility that the two gaps might be explained in the same way. Indeed, it is possible to derive the Tsirelson bound as a special case of the monogamy bound \ref{Q}, since clearly this equation entails that neither $CHSH_{13}$ nor $CHSH_{23}$ can be greater than $2 \sqrt{2}$. For example, refs \cite{Toner2, Seevinck2} derive the quantum monogamy bound from the mathematical structure of quantum mechanics and thence derive the Tsirelson bound.

However, our aim here is to understand features of quantum mechanics without presupposing its mathematical structure, and therefore we would like to work in the other direction: having derived the Tsirelson bound from global determinism, we would like to use it to obtain the quantum monogamy bound. We have seen that no-signalling and information causality are general constraints which must be obeyed by any spatiotemporally symmetric processes in a globally deterministic universe; and thus if the world is globally deterministic, any possible spatiotemporally symmetric composition of boxes must also satisfy no-signalling and information causality. In appendix \ref{monogamy2}, we show that the quantum monogamy bound can be derived from this requirement, thus demonstrating that the quantum monogamy bound would also follow from global determinism:

 \begin{theorem} \label{themonogamy}
 	
 		For a spatiotemporally symmetric process $P$ which can be modelled by a set of non-local boxes $1$, $2$, $3$, if these boxes obey information causality and no-signalling,  we must have:
 		\[ CHSH_{13}^2 + CHSH_{23}^2  \leq 8 \] 
 		
 		where $CHSH_{13}$ is the $CHSH$ quantity for the boxes $1,3$, and $CHSH_{23}$ is the $CHSH$ quantity for the boxes $2,3$, and we use the same two measurements on box $3$ for both $CHSH_{13}$ and $CHSH_{23}$

\end{theorem}

\section{The Horizon Problem}

There exists a famous problem in theoretical physics known as `the horizon problem.' Here we use the results of section \ref{monogamy} to sketch a novel approach to solving the horizon problem. We will not go into the detailed physics of the black hole here; it is enough for our purposes to describe the problem in simple conceptual terms.  We leave it to future work to describe how this toy model relates to  current knowledge about the the quantum field theoretic description of black holes.

 It is believed that black holes emit Hawking radiation which results from pair creation, i.e. the creation of two particles in a pure, maximally entangled state, one falling into the black hole and the other escaping. Consider a Hawking pair $B, C$ emitted at time $t$ after the Page time, with $B$  being the outgoing particle and $C$ the infalling particle, and denote by $A$ the
collection of all Hawking radiation emitted earlier than $t$.
Theoretical arguments show that $B$ must be entangled with $A$; but since $B$ is already maximally entangled with $C$, this would entail a violation of the monogamy bounds \ref{NS} and \ref{Q}\cite{Grudka_2018}. 

Many solutions to this problem involve the suggestion that the entanglement is broken when the particle passes the black hole horizon\cite{Almheiri_2013}, but the results of section \ref{monogamy} suggest a different option. Suppose we accept that the monogamy of correlations holds as a consequence of global determinism - that is, quantum correlations are required to be monogamous  in order to avert the possibility of creating a closed causal loop via the construction depicted in fig \ref{figsig2}. Note that this argument applies both to the no-signalling monogamy bound or the quantum monogamy bound: each of these bounds is hypothetically violated by the black hole, but we have argued that the former holds as a consequence of no-signalling and the latter as a consequence of no-signalling together with information causality, and that both no-signalling and information causality are required to hold precisely in order to prevent closed causal loops.

 But as we have noted, the construction of closed causal loops requires that the process in question be spatiotemporally symmetric, i.e. that it can be operated in both directions - and the correlations produced by the black hole are \emph{not} spatiotemporally symmetric! $C$ is maximally entangled with $A$ and $B$, which are both outside the black hole and so could theoretically be brought together and measured simultaneously, allowing signalling from $C$ to $A, B$;  but the mechanism does not allow us to create a converse construction with  $A', B'$ inside the black hole and $C'$ outside, so we cannot also signal in the opposite direction, and therefore this particular violation of monogamy does not allow the construction of a closed causal loop. So, if it is indeed the case that the quantum monogamy bound holds only as a consequence of global determinism, then the monogamy of correlations does \emph{not} need to hold in this instance, since a violation of monogamy here does not entail a violation of global determinism. This offers a simple solution to the horizon problem - perhaps $B$ really is maximally entangled with both $A$ and $C$, since the monogamy bounds \ref{NS} and \ref{Q} don't hold in this case.

\section{Contextuality}

One particularly counterintuitive property of quantum mechanics is the fact that it is \emph{contextual}. In its first incarnation, contextuality was associated with the idea that measurement outcomes correspond to definite properties of systems - so if a system has the property associated with measurement outcome $C$, then when we perform a measurement which includes $C$ as a possible outcome, we are certain to get outcome $C$. A theory in which this is true can be said to obey \emph{deterministic non-contextuality}. It is trivial to find an assignation of definite properties satisfying deterministic non-contextuality in a setting where each measurement outcome can occur in only one measurement, but of course in quantum mechanical systems of more than two dimensions, a given measurement outcome  can occur in several different possible measurement; and the Kochen-Specker theorem proves that for certain collections of quantum-mechanical measurements in greater than two dimensions there is no possible way of picking a set of outcomes $S$ which correspond to `definite properties'\cite{KochenSpecker} - if we insist on assigning deterministic outcomes, we will sometimes have to assign them in such a way that we are certain to get outcome $C$ when we perform measurement $A$, but certain to \emph{not} get outcome $C$ when we perform some other measurement $A'$ which also has $C$ as a possible outcome. So quantum mechanics does not obey deterministic non-contextuality and therefore it is not possible to think of quantum measurement outcomes as simply describing pre-existing properties of systems.

One might worry that the failure of deterministic non-contextuality means that global determinism cannot possibly be true. However, in fact there is no contradiction between the failure of deterministic non-contextuality and global determinism, because in our presentation of global determinism we have  assumed that the measurement results prescribed by ontic variables are a function of the local controllables, i.e. the set of measurements performed. Thus it is perfectly consistent with this approach that an ontic variable might prescribe that we will obtain outcome $C$ if we perform measurement $A$ but we will not obtain outcome $C$ if we perform some other measurement $A'$ which also has $C$ as a possible outcome. Indeed, since we assume that the measurement results prescribed by determinism are a function of \emph{all} the local controllables for a given experiment, it is even possible that the global variable might prescribe that we will obtain outcome $C$ if we perform measurement $A$ and another experimenter performs measurement $M$, but we will \emph{not} obtain outcome $C$ if we perform measurement $A$ and the other experimenter performs some other measurement $M'$. So deterministic contextuality is not only unproblematic but in fact entirely to be expected on this account. 

There is another approach to this subject, due to Spekkens, which identities the motivating principle behind non-contextuality as the idea that operationally equivalent situations should correspond to the same underlying ontic reality\cite{Spekkenscontextuality};  an ontological model for a theory satisfies \emph{Spekkens non-contextuality} iff any situations which are operationally equivalent are represented identically in the ontological model, where situations are described as `operationally equivalent' whenever they give rise to exactly the same measurement statistics. In particular, we say an ontological model obeys \emph{preparation non-contextuality} iff  two preparations which prepare the same quantum state always result in the same probability distribution over ontic states. Spekkens showed that no ontological model which reproduces all the results of quantum mechanics can be preparation non-contextual\cite{Spekkenscontextuality}, thus demonstrating that quantum mechanics exhibits a form of contextuality with or without the assumption of determinism.\footnote{Spekkens also showed that no ontological model which reproduces all the results of quantum mechancis can be measurement non-contextual, but we will not discuss measurement contextuality here.}

It might be tempting to suppose that we can get around this result by observing that the Spekkens proof is based on the assumption that measurement results can depend only on facts about the past, whereas in a global setting measurement results can depend on facts about the present and future as well, so there is no reason to assume that the preparation contextuality result would apply in a global context. However, Costa and Shrapnel have used the process matrix formalism to demonstrate that even if we allow non-standard temporal orderings, no model can explain quantum correlations from non-contextual ontological properties of the world, be they initial states, dynamical laws, or global constraints\cite{Shrapnel_2018}. The Costa-Shrapnel proof applies to ontic variables such as those we have considered here, and therefore a globally deterministic account of quantum mechanics would necessarily exhibit contextuality at the level of the ontic variables. 

This is not  a death knell for the approach, since there is nothing logically inconsistent about the possibility of a universe governed by a contextual theory, and indeed all the evidence so far suggests that our universe is such a universe. However, one may still find contextuality aesthetically displeasing, or worry that it indicates an incompleteness in the theory; so let us say a few words about how preparation contextuality might be realised in this context, in the hope that this might alleviate some of the discomfort.

In the globally deterministic picture, for  any given prepare-measure scenario there is a fixed value of the ontic variable $\Omega^{PM}$ which specifies  what the measurement outcome will be for any choice of preparation $P$ and measurement $M$. Thus, conditional on performing preparation $e_X$ there is a fixed ontic variable $\Omega^M$ specifying what the measurement outcome will be for any possible measurement. Denote by $\Omega^M(g | M_A)$ the result $g$ which will definitely be obtained if we perform measurement $M_A$ when the ontic variable is equal to $\Omega^M$. In general the value of the  ontic variable is different each time we perform $e_X$, so we can define a distribution  $\mu_X(\Omega^M)$, which expresses the relative frequency of ontic variables $\Omega^{PM}$ which after conditioning on $e_X$ yield the ontic variable $\Omega^M$. Thus if we perform preparation $e_X$ followed by measurement $M_A$, the probability that we get result $r$ is $\sum_{\Omega^M : \Omega^M(g | M_A) = r} \mu_X(\Omega^M)$. Clearly for any preparation $e_Y$ which according to standard quantum mechanics prepares the same state as $e_X$, we must have that for any $M_A, r$,  $\sum_{\Omega^M : \Omega^M(g | M_A) = r} \mu_X(\Omega^M) = \sum_{\Omega^M : \Omega^M (g | M_A) = r} \mu_Y(\Omega^M)$, since the two preparations must produce identical distributions over measurement outcomes. 

But the distributions over \emph{counterfactual} measurement outcomes do not have to be equal. For example, it could perfectly well be the case that when we perform $e_X$, the outcome prescribed by the ontic variable for measurement $M_A$ is always the same as the outcome prescribed for some incompatible measurement $M_B$, whereas when we perform $e_Y$, the outcome prescribed by the ontic variable for measurement $M_A$ is always different from the outcome prescribed for $M_B$. This will make no difference to the observable statistics associated with $e_X$ and $e_Y$, since we can never perform both $M_A$ and $M_B$ on a single system and therefore we will never know that the distributions are different. So $e_X$ and $e_Y$ can be operationally equivalent whilst not being equivalent at the level of \emph{counterfactual} outcomes. 

Since a key lesson of sections \ref{Tsi} and \ref{monogamy} was that to explain quantum behaviour we must take counterfactual outcomes seriously as elements of reality, this analysis suggests that standard preparation contextuality is simply not the right tool to apply in a globally deterministic theory. If the counterfactual outcomes are elements of reality, then the notion of equivalence that matters is not straightforward operational equivalence but rather \emph{counterfactual operational equivalence}, which is exhibited by any two situations which lead to the same statistics over \emph{counterfactual} outcomes. Then we can say that an ontological model obeys \emph{counterfactual non-contextuality} iff any situations which exhibit counterfactual operational equivalence are represented identically in the ontic model. And since the `ontic states' in the globally deterministic picture are just the ontic variables which specify the counterfactual outcomes, two situations giving rise to the same distributions over counterfactual outcomes must also be represented identically in the ontic model. So global determinism is compatible with the type of non-contextuality that is most relevant for the global setting, although it cannot satisfy ordinary preparation contextuality.

It's also important to keep in mind that there are no objective probabilities in a globally deterministic universe: for each instance of a given process the ontic variable has a fixed value determined by facts about the configuration of the rest of the universe, although we ourselves don't have access to all those facts and thus don't know the value of the ontic variable. The probability distributions $\mu(\Omega)$ are therefore \emph{epistemic} probabilities encoding our uncertainty about the global facts which determine the global variable. So although two different ways of preparing the same quantum state may give rise to different distributions $\mu(\Omega)$, this does not mean that they `correspond to different underlying ontic realities,' since there are no probability distributions at the ontic level, just a single value of the ontic variable. Thus the motivating concern behind Spekkens contextuality seems less relevant in the globally deterministic case.

\section{Conclusion} 

There is a small but growing chorus of voices in theoretical physics arguing that quantum mechanics should be understood as the local limit of a global theory. In this paper we have continued the project of attempting to understand  what local physics has to tell us about the structure of the global laws. In particular, we aimed to show that several prima facie puzzling features of quantum mechanics can be explained by appeal to the hypothesis that the universe is deterministic on a global scale whilst appearing probabilistic at the level of the variables that we ourselves can discover and manipulate. The gap between the quantum and no-signalling non-locality bounds and the gap between the quantum and no-signalling monogamy bounds can both be explained in this way, and there may even be a new route to solving the black hole horizon problem. 

Obviously none of this constitutes conclusive proof for the hypothesis of global determinism; inference to the best explanation is abductive rather than deductive reasoning, and there will inevitably be legitimate disagreements about the merits of various possible explanations. But we suggest that the explanation put forward here is at the very least doing something importantly different from explanations of the Tsirelson bound that previously been advanced. Information causality, while suggestive, seems too specific and agent-centric to be a candidate for a fundamental principle of nature, and although  there exist various entropic derivations of information causality, these approaches seem to merely restate in a more mathematical language the fact that quantum measurements outcomes are not quite as non-classical as they could have been. We are left with the question of \emph{why} quantum systems should sometimes behave like classical variables even though it is not possible to perform all possible measurements simultaneously, and for the scientific realist, the answer must  tell a clear story about how this behaviour arises from general features of an underlying reality.  The assumption of global determinism provides such an answer:  even though it is not possible to perform all possible measurements simultaneously, determinism requires that the universe can always say definitely which outcome would occur for each possible measurement.  Therefore what we have referred to as a `counterfactual outcome' must be regarded as an element of reality despite its inaccessibility to local observers, so it \emph{can} be regarded as a classical variable under appropriate circumstances. The existence of this classical variable explains why quantum mechanics does not explore the full limits of the possibilities opened up by the existence of incompatible measurements. 

There are many interesting directions for further investigation in this area. Although we have derived the exact bound on quantum correlations for the case of the CHSH quantity, it is known that information causality is still not adequate to fully define the set of quantum correlations in more general setups. Indeed, thus far no general principle of this kind has been able to distinguish between the set of quantum correlations and the set of \emph{almost-quantum} correlations defined in ref \cite{Navascu_s_2015}. It would certainly be interesting if global determinism could be used to distinguish the two sets. Moving beyond the study of non-local correlations, a logical next step would be to derive the whole of quantum mechanics from hypotheses about the global structure of the laws of nature. There are already a number of derivations of quantum mechanics from general operational or information-theoretic principles\cite{Hardyreasonable, Masanes_2013, CDK, Bub}, but these derivations are subject to the same criticism as we have levelled at the entropic derivations of information causality: they are suggestive, but they don't provide a convincing realist story about why the classical and quasi-classical principles they invoke should still hold in this non-classical theory. Thus there is still room for a new derivation on realist grounds similar to that we have given in this paper.  A complementary line of investigation involves attempting to write down specific global models from which the local behaviour of quantum mechanics can be derived. If the arguments of this paper are accepted, these models should be deterministic in a suitable global sense, which may help to constrain the search space for suitable models. Finally, the possibility of global, all-at-once laws of nature raises many deep conceptual questions about our usual ways of thinking about time, probability, determinism and modality, and there is a lot of work to be done  untangling these consequences and building a new conceptual framework for the global approach.

\appendix

\section{Costa and Shrapnel \label{cs}}

Our generalized ontological model framework is based on the formalism defined by Costa and Shrapnel in ref \cite{Shrapnel_2018}, although Costa and Shrapnel work in terms of an \emph{ontic process} $\omega$, rather than the \emph{ontic variable} $\Omega$ that we have used here.  Costa and Shrapnel define the ontic process to be invariant  under local operations, but then they allow that the ontic process may depend on the environment $W$, i.e. they define a function $p(\omega | W)$ which is presumably intended to be non-trivial. But the environment \emph{can} be controlled by local operations - by choosing to prepare our entangled particles in one state or another prior to measuring them, we are making a choice of environment -  so this terminology could make it look as if the ontic process is not invariant under local operations after all. 

In fact, we presume that what Costa and Shrapnel mean is that there exists a fixed background variable $\Omega$ which can't be changed by local operations, and for any given experiment, $\Omega$ contains a set of instructions $\omega$ conditional on each possible choice of the environment $W$, so $p(\omega | W)$ can be understood as `the relative frequency of  background variables $\Omega$ which define instructions $\omega$ conditional on choosing environment $W$.' That is, the ontic variable does not depend on the choice of process, and the ontic process is derived directly from the ontic variable by conditioning on the choice of process - the value that the ontic process would have conditional on each possible choice of environment is already fixed prior to our choice of environment, so the ontic variable is not being altered by the choice of process. 

Thus the sake of clarity, in this article we have chosen to work in terms of ontic variables rather than ontic processes, so it is entirely clear that the ontic variable $\Omega$ is independent of all local operations, including the choice of environment and/or process. This also means that the ontic variable does not depend on where we put the `cut' between environment and local controllables. This is important, because the methodology that we adopt in the proof of theorem \ref{parind} requires that the ontic variable can't under any circumstances be made to depend on the outcomes of other processes, and therefore we need a definition which ensures that the ontic variable is invariant under \emph{all} possible local operations.

\section{Lemma \label{lemma}}

We need the following lemma for the proof in section \ref{signal}.

\begin{lemma} 
	If  $N$ and $Q$ are statistically independent and $O$ is a function of $N$ and $Q$, then $H(O ) =  I(O : N)  + I(N O : Q)$
	\label{lemmasum} 
\end{lemma}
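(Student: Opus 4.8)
The plan is to establish the identity $H(O) = I(O:N) + I(NO:Q)$ by expanding the mutual informations via the chain rule and exploiting the two hypotheses — that $O = f(N,Q)$ is a deterministic function, and that $N$ and $Q$ are independent. The key observation is that $O$ being a function of $(N,Q)$ means $H(O \mid N, Q) = 0$, so all the entropy of $O$ is ``accounted for'' by $N$ and $Q$ jointly. The whole lemma is really a bookkeeping statement about how that total information splits into a part carried by $N$ and a part carried by $Q$ (given $N$).

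First I would rewrite the second term using the chain rule for mutual information: $I(NO:Q) = I(N:Q) + I(O:Q \mid N)$. The independence hypothesis kills the first piece, $I(N:Q) = 0$, leaving $I(NO:Q) = I(O:Q\mid N)$. So the target reduces to showing $H(O) = I(O:N) + I(O:Q\mid N)$. Now I would expand the right-hand side using the definitions $I(O:N) = H(O) - H(O\mid N)$ and $I(O:Q\mid N) = H(O\mid N) - H(O\mid N,Q)$. Adding these telescopes to $H(O) - H(O\mid N, Q)$, so the claim becomes equivalent to $H(O\mid N,Q) = 0$.

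That final equality is exactly the content of the hypothesis that $O$ is a function of $N$ and $Q$: once $N$ and $Q$ are fixed, $O = f(N,Q)$ is determined with no residual uncertainty, so its conditional entropy vanishes. Assembling the pieces gives the result.

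\textbf{The main obstacle} is essentially just making sure the independence assumption is invoked in the right place — it is needed precisely to discard $I(N:Q)$, and without it the identity fails. The rest is a routine telescoping of entropy differences, so I expect no genuine difficulty beyond being careful that the conditioning in $I(O:Q\mid N)$ matches the term produced by the chain rule. One could alternatively organize the proof as a direct chain-rule decomposition of $H(O,N,Q)$ in two different orders and compare, but the telescoping route above is the most transparent.
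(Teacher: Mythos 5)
Your proof is correct and amounts to the same elementary chain-rule bookkeeping as the paper's: both arguments rest on exactly $H(O \mid N Q) = 0$ from the function hypothesis and on independence (your $I(N:Q)=0$ is the paper's $H(NQ)=H(N)+H(Q)$), with your telescoping of $I(O:N)+I(O:Q\mid N)$ being a trivial reorganization of the paper's rearrangement of $H(NQ|O)+H(O)=H(NQ)$. No gaps; nothing further is needed.
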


\begin{proof} 
	
	Since $O$ is a function of $N$ and $Q$, $H(O | N  Q) = 0$, and thus from Bayes' rule  $H(N  Q | O) + H(O) = H(N Q)$ 
	
	\vspace{2mm}
	
	Using the chain rule for conditional entropy, we obtain $H(N | O) + H(Q | N O) + H(O) = H(N Q) $ 
	
		\vspace{2mm}

	Since $N$ and $Q$ are statistically independent,  $H(N Q) = H(N) + H(Q)$
	
		\vspace{2mm}

	Thus $ H(O) = H(N) - H(N| O) + H(Q) - H(Q | NO) = I(O : N)  + I(N  O : Q)$
	
\end{proof} 

\section{Parameter and Outcome Independence  \label{Bell}}

	In the proof of Bell's theorem the assumption of `locality' is encoded mathematically by the assumption of factorisability. That is, we assume there exists an `ontic state' $\lambda$ which mediates the correlations between the measurement outcomes, so that: 
	
	\begin{equation}p (s, t | \lambda, M^A = i, M^B = j) = p^A(s | \lambda, M^A = i) p^B(t | \lambda, M^B = j)  \end{equation}
	
	As pointed out by Jarrett, factorisability may be understood by the product of two separate assumptions\cite{Jarrett}: 
	
	\begin{enumerate}
		
		\item \textbf{Parameter independence:} The probability distribution over the outcomes of the measurement performed by experimenter $B$ does not depend on which measurement experimenter $A$ chooses to perform, and vice versa: 
		
		\begin{equation} p^B (t | \lambda, M^A = i, M^B = j) = p^B (t | \lambda, M^B = j)   \end{equation}

		\item \textbf{Outcome independence:} The probability distribution over the outcomes of the measurement performed by experimenter $B$ does not depend on the outcome of the measurement performed by experimenter $A$, and vice versa. \cite{Shimony}:

		\begin{equation} p^B (t | \lambda, M^B = j, O^A = s) = p^B (t | \lambda, M^B = j)   \end{equation}

	\end{enumerate}

	The violation of Bell's inequality by quantum mechanics therefore indicates that either parameter independence or outcome independence must be violated in nature, but not necessarily both.

	These definitions for parameter independence and outcome independence are of course based on the assumption that we have a \emph{local} hidden variable $\lambda$. But one could also define parameter independence and outcome independence for a more general context. For a pair of measurements made by experimenters $A$ and $B$, let $\Lambda$ be a hidden variable which specifies the probability for the set of outcomes $s, t$ conditional on the measurement choices $M^a = i, M^b = j$. ($\Lambda$ could just prescribe the quantum probabilities, or it could go beyond the quantum probabilities - in the most extreme case, it could prescribe the outcomes as a deterministic function of the measurement choices). 
	
	In a general non-local setting, $\Lambda$ may prescribe that the outcome of $B$'s measurement depends on $A$'s choice of measurement and/or the result of $A$'s measurement. But we can still meaningfully distinguish between dependence on the parameter and dependence on the measurement. In this context, it would seem reasonable to say that (generalised) parameter independence holds if, after conditioning on $\Lambda$ and on the outcome of the measurement performed by experimenter $A$, there is no further dependence of the outcomes of the measurement performed by $B$ on the choice of the measurement $A$: the nonlocal dependence can all be attributed to the outcome of $A$. Similarly, it would seem reasonable to say that (generalised) outcome independence holds if, conditional on $\Lambda$ and on the choice of measurement performed by experimenter $A$, the distribution over the outcomes of the measurement performed by experimenter $B$ is independent of the outcome of the measurement performed by experimenter $A$. This explains our use of the terminology `(counterfactual) parameter independence' - by conditioning on a fixed outcome, we can determine whether the nonlocal correlations require a dependence on the parameter as well the outcome. 
	
\section{Proof for section \ref{monogamy} \label{monogamy2}}
 \begin{theorem} \label{themonogamy}
 	
 		For a spatiotemporally symmetric process $P$ which can be modelled by a set of non-local boxes $1$, $2$, $3$, if these boxes obey information causality and no-signalling,  we must have:
 		\[ CHSH_{13}^2 + CHSH_{23}^2  \leq 8 \] 
 		
 		where $CHSH_{13}$ is the $CHSH$ quantity for the boxes $1,3$, and $CHSH_{23}$ is the $CHSH$ quantity for the boxes $2,3$, and we use the same two measurements on box $3$ for both $CHSH_{13}$ and $CHSH_{23}$

\end{theorem}

\begin{proof} 
	
	Consider a set of three boxes $B_1,B_2,B_3$ such that boxes $B_1, B_2$ accept inputs $m_1, m_2$ and produce outputs $g_1, g_2$, and box $B_3$ accepts input $a$ and produces output $c$.  
As shown in ref \cite{Masanes_2006}, without changing the value of any CHSH quantities we can always adjust the boxes such that for each pair $i, 3$ the probability of obtaining $g_i \oplus c = a m_i$ is the same for both $m_i = 0$ and $m_i = 1$ and is equal to $P_i = \frac{1}{2}(1 + e_i)$, so we will henceforth work entirely with boxes in this form. We will henceforth suppose that box $B_1$ is correlated with $B_3$ and box $B_2$ is correlated with $B_3$ but $B_1$ and $B_2$ are not correlated with one another, i.e. the complete information about $m_1, a, c$ screens off $g_1$ from $m_2$ and vice versa.   If this is not the case we can always employ local randomisation procedures to remove any correlations between $B_1$ and $B_2$. 

Now suppose Alice has box $3$ and Bob has boxes $1, 2$. Given an input $m$, Bob selects corresponding values $m_1, m_2$ by some deterministic procedure, inputs these values into his boxes, and then performs some post-processing on the outputs $g_2, g_2$ to produce an output $g$, thus constructing a composite box $B_{12}$. Let the probability of obtaining $g \oplus c = am$ using this composite box be equal to $P_{12} = \frac{1}{2}(1 + e_{12})$. 

Note that we have:

\begin{multline} 
I(g  : a  m | c  ) =  I( g_1 g_2 : a m | c  )   = I( g_1 : a  m  | c ) + I( g_2 : a m | c ) - I(g_1 : g_2 | c  ) \\
=   I(g_1 : a m_1 | c  ) + I(g_2 : a  m_2 | c ) - I(g_1 : g_2 | c  )  \\  
\end{multline} 

where at the second line we have used the fact that the complete information about $m_1, a, c$ screens off $g_1$ from $m_2$ and vice versa.

\vspace{2mm} 

Note that no-signalling entails that if $e_2 >  0$  we must have $p(g_1 = g_2) = \frac{1}{2}$ for each value of $c$. Otherwise, we could violate the no-signaling principle by the following procedure.  First consider the case $p(g_1 = g_2) > \frac{1}{2}$. Alice has boxes $B_1, B_3$ and Bob has box $B_2$. Alice makes a guess $g_1 = g_2$, which is correct with probability $p(g_1 = g_2) = \frac{1}{2}(1 + e_{g_1 g_2})$. Alice chooses input $a = 1$ for box $B_3$, then guesses  $m_2 = g_1 \oplus c$; this guess is correct with probability $(\frac{1}{2} + \frac{1}{2} e_{g_1 g_2})( \frac{1}{2} +\frac{1}{2}  e_2) + (\frac{1}{2} - \frac{1}{2} e_{g_1 g_2})( \frac{1}{2} - \frac{1}{2}  e_2) = \frac{1}{2} + \frac{1}{2} e_2  e_{g_1 g_2}$. This quantity is greater than $\frac{1}{2}$ since ex hypothesi $e_2$ and $ e_{g_1 g_2}$ are both greater than zero; so we have achieved signalling from Bob to Alice. For the case $p(g_1 = g_2) < \frac{1}{2}$ we proceed similarly except that Alice guesses $g_1 \neq g_2$ and thus again we achieve signalling from Alice to Bob. So the no-signalling principle entails that $p(g_1 = g_2) = \frac{1}{2}$ for each value of $c$. The same argument can be made if $e_1 > 0$, and hence provided that $e_2 > 0$ or $e_1 > 0$ (and we may assume that one of these is the case, since otherwise the monogamy bound is trivial), we have  $I(g_1 : g_2 | c ) = 0$.

\vspace{2mm}

Using the channel capacity formula, we have that that for $i \in \{ 1, 2\} \\ I( g_i : a  m_i | c) =   1 - h(P_i)$ and similarly $I( g : a m | c ) = 1 - h(P_{12})$ and hence: 
 
	\begin{align*} 
1 - h(P_{12}) = 2 - h(P_1) - h(P_2) \\
\therefore 1 - h(\frac{1}{2}(1 +  e_{12} )) =   2 - h(\frac{1}{2}(1 +  e_1 ))- h(\frac{1}{2}(1 + e_2 ))  \\ 
\therefore \frac{1}{2 \ln(2)} \sum_{q = 1}^{\infty}\frac{1}{q( 2q - 1)} e_{12}^{2q} = \frac{1}{2 \ln(2)} \sum_{q = 1}^{\infty}\frac{1}{q( 2q - 1)} (e_1^{2q} + e_2^{2q}     ) \\
\therefore \frac{1}{2 \ln(2)} (e_{12}^{2} + \sum_{q = 2}^{\infty}\frac{1}{q( 2q - 1)} e_{12}^{2q}    )= \frac{1}{2 \ln(2)} (e_1^{2} + e_2^{2}   +  \sum_{q = 2}^{\infty}\frac{1}{q( 2q - 1)} (e_1^{2q} + e_2^{2q} )    ) \\ 
\therefore  e_{12}^{2}    = e_1^{2} + e_2^{2} +  s     \\ 
\end{align*} 
 
 where $s = \sum_{q = 2}^{\infty}\frac{1}{q( 2q - 1)} (e_1^{2q} + e_2^{2q} -  e_{12}^{2q})$;  note that $s < 0$ if $e_{12}^{2}    \geq  e_1^{2} +  e_2^{2}$.
 
Now recall that the protocol of ref \cite{Pawlowski} achieves a value for the information causality quantity which is equal to $\sum^n_{k = 0} { n \choose k}( 1- h(\frac{1 + e_{m = 0} ^{n - k} e_{m = 1}^k}{2}))$, using a  set of $n$ pairs of boxes  such that the probability of obtaining $g \oplus c = a m$ is given by $ \frac{1}{2}(1 +  e_{m = 0})$ when $m = 0$ and $ \frac{1}{2}(1 +  e_{m = 1})$. Imagine that Alice and Bob create $n$ copies of the pairs $B_{12}, B_3$ and use them to execute the protocol exhibited in ref \cite{Pawlowski}. Thus  $e_{m = 0} = e_{m = 1} = e_{12}$, so this procedure achieves a value for the information causality quantity given by:

	\begin{align*} 
I = 	\sum^n_{k = 0}{ n \choose k}( 1- h(\frac{1 + (\sqrt{ e_1^2 + e_2^2 + s})^2}{2}))  \\
	=  \frac{1}{2 \ln(2)} \sum^n_{k = 0}{ n \choose k}\sum_{q = 1} \frac{1}{q( 2q - 1)} (e_1^2 + e_2^2 + s)^{qn} \\ 	\end{align*} 
	
	Now note that:

	\begin{align*} (e_1^2 + e_2^2 + s)^n + \frac{(e_1^2 + e_2^2 + s)^{2n}}{6} \geq  (e_1^2 + e_2^2 )^n - | s|^n  + \frac{(e_1^2 + e_2^2)^{2n}}{6}  -\frac{ |s|^{2n}}{6}
 \\=   (e_1^2 + e_2^2 )^n - | \sum_{q = 2}^{\infty}\frac{1}{q^n( 2q - 1)^n} (e_1^{2q} + e_2^{2q} -  e_{12}^{2q})^n |  + \frac{(e_1^2 + e_2^2)^{2n}}{6}  -\frac{ |s|^{2n}}{6} \end{align*} 
 
 But for sufficiently large $n$ we will always have $| \sum_{q = 2}^{\infty}\frac{1}{q^n( 2q - 1)^n} (e_1^{2q} + e_2^{2q} -  e_{12}^{2q})^n |  \leq \frac{(e_1^2 + e_2^2)^n}{6}$, so we obtain $(e_1^2 + e_2^2 + s)^n + \frac{(e_1^2 + e_2^2 + s)^{2n}}{6} \geq  (e_1^2 + e_2^2 )^n  -\frac{ |s|^{2n}}{6}$. The negative term $-\frac{ |s|^{2n}}{6}$ can be absorbed into the positive part of the next term in the same way, and so on throughout the summation, so we obtain:

 \begin{align*} 
I   \geq  \frac{1}{2 \ln(2)} \sum^n_{k = 0}{ n \choose k} (e_1^2 + e_2^2 )^{n} \\
	= (2 e_1^2 + 2e_2^2) ^n 
\end{align*} 
 
	So the information causality quantity will exceed $1$ for some sufficiently large $n$ if $2 e_1^2 + 2e_2^2 > 1$; thus we must have   $e_1^{2} + e_2^{2} \leq \frac{1}{2}$. Then since $CHSH_{13} = 4 e_1$ and $CHSH_{23} = 4 e_2$,we finally obtain: 

\[  CHSH_{13}^2 + CHSH_{23}^2    \leq 8 \] 
\end{proof}

\section{Multipartite Information Causality \label{Hsu}}

We are aware of an alternative derivation of the quantum monogamy bound from information causality, in ref  \cite{Hsu_2012}. This paper defines a generalisation of information causality known as `multipartite information causality,' and suggests that the quantum monogamy bound can be derived from it. However, this cannot be correct, as neither classical nor quantum mechanics actually obey `multipartite information causality,' as defined here.

Multipartite information causality is defined in ref \cite{Hsu_2012} as follows: 

\begin{definition} 
	
		For a set of $q$ non-local boxes $1$, $2$ ...  $q$ such that box $i$ for $i \in [1 ... q - 1)$ takes an input $m_i$ which is an integer in the range $[ 0, n)$ and produces a one-bit outcome $g_i$, and box $1$ takes an input $\vec{A} = a_0 a_1 ... a_{n - 1}$ which is a bit string of length $n$ and produces an outcome $c$ such that $I( c: A) = 0$, multipartite information causality requires: 
	
	\[\sum_i \sum_w  I(a_w : c g_i | m_i = w)  \leq H(c) \]

	\end{definition} 

To see that multipartite information causality does not hold in quantum mechanics, consider the case $q = 3$ and suppose that the three boxes are constructed from a GHZ state $\frac{1}{\sqrt{2}}(| 0 0 0 \rangle + | 1 1 1 \rangle)$. All three parties perform a measurement in the basis $\{ | 0 \rangle , | 1 \rangle\}$ producing three outcomes $g_x = g_y = o_z$. We then construct $c = a_0 \oplus o_z$. Since $o_z$ is locally random, it is clear that $I( c: A) = 0$ and $H(c) = 1$. But also we have $g_x \oplus c = a_0$ and $g_y \oplus c = a_0$, so clearly, if we label $\{ | 0 \rangle , | 1 \rangle\}$ as measurement direction $0$, we have  $I(a_0 : c g_x | m_x = 0) = 1$ and also $I(a_0 : c g_y | m_y = 0) = 1$, so $\sum_w ( I(a_w : c g_x | m_x = w) +  I(a_x : c g_y | m_y = w)  ) \geq 2 > H(c) $. 

In fact, we do not actually need any non-locality to achieve this result - it is enough that the three parties share a classical random bit $b = g_x = g_y = o_z$. So `multipartite information causality' does not even hold in classical physics. 

The problem is that we are `double counting' information in the summation - we can obtain one bit of information about $A$ from box $1$ and one bit from box $2$, but we are obtaining the \emph{same} information from both of them, so although the summation comes to more than $H(c)$, in fact the total distinct information gained is no greater than $H(c)$. This indicates that the  correct version of multipartite information causality, for the case of three boxes only, is as follows: 

\begin{definition} 
	
	For a set of non-local boxes $1$, $2$, $3$ such that boxes $1$ and $2$ take  inputs $m_1, m_2$ which are integers in the range $[ 0, n)$ and produce one-bit outcomes $g_1, g_2$, and box $3$ takes an input $\vec{A} = a_0 a_1 ... a_{n - 1}$ which is a bit string of length $n$ and produces an outcome $c$ such that $I( c: A) = 0$, multipartite information causality requires: 
	
	\[ \sum_w ( I(a_w : c g_1 | m_1 = w) +  I(a_2 : c g_2 | g_1, m_1 = w, m_2 = w)  ) \leq H(c) \]

\end{definition} 

This version follows directly from standard information causality, as by applying the chain rule the constraint can be written as $\sum_w ( I(a_1 a_2 : c g_1 g_2 | m_1 = w, m_2 = w)  \leq H(c) $ and then $m_1 = w, m_2 = w$ can be treated as a single measurement with outcome $g_1, g_2$.  

 \bibliographystyle{plainurl} 

\bibliography{newlibrary11}{}

\begin{thebibliography}{10}

\bibitem{QMG}
Emily {Adlam}.
\newblock {Quantum Mechanics and Global Determinism}.
\newblock {\em Quanta}, 7:40--53, 2018.
\newblock URL: \url{http://quanta.ws/ojs/index.php/quanta/article/view/76}.

\bibitem{adlamspooky}
Emily Adlam.
\newblock {Spooky Action at a Temporal Distance}.
\newblock {\em Entropy}, 20(1):41, 2018.

\bibitem{Al_Safi_2011}
Sabri~W. Al-Safi and Anthony~J. Short.
\newblock Information causality from an entropic and a probabilistic
  perspective.
\newblock {\em Physical Review A}, 84(4), Oct 2011.
\newblock URL: \url{http://dx.doi.org/10.1103/PhysRevA.84.042323}, \href
  {https://doi.org/10.1103/physreva.84.042323}
  {\path{doi:10.1103/physreva.84.042323}}.

\bibitem{Almheiri_2013}
Ahmed Almheiri, Donald Marolf, Joseph Polchinski, Douglas Stanford, and James
  Sully.
\newblock An apologia for firewalls.
\newblock {\em Journal of High Energy Physics}, 2013(9), Sep 2013.
\newblock URL: \url{http://dx.doi.org/10.1007/JHEP09(2013)018}, \href
  {https://doi.org/10.1007/jhep09(2013)018}
  {\path{doi:10.1007/jhep09(2013)018}}.

\bibitem{Barnum_2010}
Howard Barnum, Jonathan Barrett, Lisa~Orloff Clark, Matthew Leifer, Robert
  Spekkens, Nicholas Stepanik, Alex Wilce, and Robin Wilke.
\newblock Entropy and information causality in general probabilistic theories.
\newblock {\em New Journal of Physics}, 12(3):033024, Mar 2010.
\newblock URL: \url{http://dx.doi.org/10.1088/1367-2630/12/3/033024}, \href
  {https://doi.org/10.1088/1367-2630/12/3/033024}
  {\path{doi:10.1088/1367-2630/12/3/033024}}.

\bibitem{Bellinequality}
J.~{Bell}.
\newblock On the {Einstein}-{Podolsky}-{Rosen} paradox.
\newblock {\em Physics}, 1:195--200, July 1964.

\bibitem{Bellfree}
J.~{Bell}.
\newblock Free variables and local causality.
\newblock In {\em Speakable and unspeakable in quantum mechanics}, chapter~12.
  Cambridge University Press, 1987.

\bibitem{Bub}
J.~{Bub}.
\newblock {Why the quantum?}
\newblock {\em eprint arXiv:quant-ph/0402149}, February 2004.
\newblock \href {http://arxiv.org/abs/quant-ph/quant-ph/0402149}
  {\path{arXiv:quant-ph/quant-ph/0402149}}.

\bibitem{CDK}
Giulio Chiribella, Giacomo~Mauro D’Ariano, and Paolo Perinotti.
\newblock Informational derivation of quantum theory.
\newblock {\em Physical Review A}, 84(1), Jul 2011.
\newblock URL: \url{http://dx.doi.org/10.1103/PhysRevA.84.012311}, \href
  {https://doi.org/10.1103/physreva.84.012311}
  {\path{doi:10.1103/physreva.84.012311}}.

\bibitem{Tsirelson}
Boris~S Cirel'son.
\newblock Quantum generalizations of bell's inequality.
\newblock {\em Letters in Mathematical Physics}, 4(2):93--100, 1980.

\bibitem{Koffmanetal}
V.~{Coffman}, J.~{Kundu}, and W.~K. {Wootters}.
\newblock {Distributed entanglement}.
\newblock {\em Physical Review A}, 61(5):052306, May 2000.
\newblock \href {http://arxiv.org/abs/quant-ph/quant-ph/9907047}
  {\path{arXiv:quant-ph/quant-ph/9907047}}, \href
  {https://doi.org/10.1103/PhysRevA.61.052306}
  {\path{doi:10.1103/PhysRevA.61.052306}}.

\bibitem{Dahlsten_2012}
Oscar C~O Dahlsten, Daniel Lercher, and Renato Renner.
\newblock Tsirelson’s bound from a generalized data processing inequality.
\newblock {\em New Journal of Physics}, 14(6):063024, Jun 2012.
\newblock URL: \url{http://dx.doi.org/10.1088/1367-2630/14/6/063024}, \href
  {https://doi.org/10.1088/1367-2630/14/6/063024}
  {\path{doi:10.1088/1367-2630/14/6/063024}}.

\bibitem{D_Ariano_2016}
Giacomo~Mauro D’Ariano.
\newblock Physics without physics.
\newblock {\em International Journal of Theoretical Physics}, 56(1):97–128,
  Nov 2016.
\newblock URL: \url{http://dx.doi.org/10.1007/s10773-016-3172-y}, \href
  {https://doi.org/10.1007/s10773-016-3172-y}
  {\path{doi:10.1007/s10773-016-3172-y}}.

\bibitem{Goldstein_2003}
Sheldon Goldstein and Roderich Tumulka.
\newblock Opposite arrows of time can reconcile relativity and nonlocality.
\newblock {\em Classical and Quantum Gravity}, 20(3):557–564, Jan 2003.
\newblock URL: \url{http://dx.doi.org/10.1088/0264-9381/20/3/311}, \href
  {https://doi.org/10.1088/0264-9381/20/3/311}
  {\path{doi:10.1088/0264-9381/20/3/311}}.

\bibitem{Grudka_2018}
Andrzej Grudka, Michael J.~W. Hall, Michał Horodecki, Ryszard Horodecki,
  Jonathan Oppenheim, and John~A. Smolin.
\newblock Do black holes create polyamory?
\newblock {\em Journal of High Energy Physics}, 2018(11), Nov 2018.
\newblock URL: \url{http://dx.doi.org/10.1007/JHEP11(2018)045}, \href
  {https://doi.org/10.1007/jhep11(2018)045}
  {\path{doi:10.1007/jhep11(2018)045}}.

\bibitem{Holevo1973SomeEO}
A.~Holevo.
\newblock Some estimates of the information transmitted by quantum
  communication channels.
\newblock 1973.

\bibitem{10.3389/fphy.2020.00139}
Sabine Hossenfelder and Tim Palmer.
\newblock Rethinking superdeterminism.
\newblock {\em Frontiers in Physics}, 8:139, 2020.
\newblock URL:
  \url{https://www.frontiersin.org/article/10.3389/fphy.2020.00139}, \href
  {https://doi.org/10.3389/fphy.2020.00139}
  {\path{doi:10.3389/fphy.2020.00139}}.

\bibitem{Hsu_2012}
Li-Yi Hsu.
\newblock Multipartite information causality.
\newblock {\em Physical Review A}, 85(3), Mar 2012.
\newblock URL: \url{http://dx.doi.org/10.1103/PhysRevA.85.032115}, \href
  {https://doi.org/10.1103/physreva.85.032115}
  {\path{doi:10.1103/physreva.85.032115}}.

\bibitem{Jarrett}
Jon~P. Jarrett.
\newblock On the physical significance of the locality conditions in the bell
  arguments.
\newblock {\em Noûs}, 18(4):569--589, 1984.
\newblock URL: \url{http://www.jstor.org/stable/2214878}.

\bibitem{KochenSpecker}
Simon Kochen and E.P. Specker.
\newblock The problem of hidden variables in quantum mechanics.
\newblock In C.A. Hooker, editor, {\em The Logico-Algebraic Approach to Quantum
  Mechanics}, volume~5a of {\em The University of Western Ontario Series in
  Philosophy of Science}, pages 293--328. Springer Netherlands, 1975.
\newblock URL: \url{http://dx.doi.org/10.1007/978-94-010-1795-4_17}, \href
  {https://doi.org/10.1007/978-94-010-1795-4_17}
  {\path{doi:10.1007/978-94-010-1795-4_17}}.

\bibitem{PuseyLeifer}
M.~{Leifer} and M.~{Pusey}.
\newblock {Is a time symmetric interpretation of quantum theory possible
  without retrocausality?}
\newblock {\em ArXiv e-prints}, July 2016.
\newblock \href {http://arxiv.org/abs/1607.07871} {\path{arXiv:1607.07871}}.

\bibitem{Masanes_2013}
L.~Masanes, M.~P. Muller, R.~Augusiak, and D.~Perez-Garcia.
\newblock Existence of an information unit as a postulate of quantum theory.
\newblock {\em Proceedings of the National Academy of Sciences},
  110(41):16373–16377, Sep 2013.
\newblock URL: \url{http://dx.doi.org/10.1073/pnas.1304884110}, \href
  {https://doi.org/10.1073/pnas.1304884110}
  {\path{doi:10.1073/pnas.1304884110}}.

\bibitem{Masanes_2006}
Ll. Masanes, A.~Acin, and N.~Gisin.
\newblock General properties of nonsignaling theories.
\newblock {\em Physical Review A}, 73(1), Jan 2006.
\newblock URL: \url{http://dx.doi.org/10.1103/PhysRevA.73.012112}, \href
  {https://doi.org/10.1103/physreva.73.012112}
  {\path{doi:10.1103/physreva.73.012112}}.

\bibitem{Navascu_s_2015}
Miguel Navascués, Yelena Guryanova, Matty~J. Hoban, and Antonio Acín.
\newblock Almost quantum correlations.
\newblock {\em Nature Communications}, 6(1), Feb 2015.
\newblock URL: \url{http://dx.doi.org/10.1038/ncomms7288}, \href
  {https://doi.org/10.1038/ncomms7288} {\path{doi:10.1038/ncomms7288}}.

\bibitem{Oreshkov2}
O.~{Oreshkov}, F.~{Costa}, and {\v C}.~{Brukner}.
\newblock {Quantum correlations with no causal order}.
\newblock {\em Nature Communications}, 3:1092, October 2012.
\newblock \href {http://arxiv.org/abs/quant-ph/1105.4464}
  {\path{arXiv:quant-ph/1105.4464}}, \href {https://doi.org/10.1038/ncomms2076}
  {\path{doi:10.1038/ncomms2076}}.

\bibitem{Oreshkov}
O.~{Oreshkov} and C.~{Giarmatzi}.
\newblock {Causal and causally separable processes}.
\newblock {\em New J. Phys}.

\bibitem{Pawlowski}
M.~{Pawlowski}, T.~{Paterek}, D.~{Kaszlikowski}, V.~{Scarani}, A.~{Winter}, and
  M.~{{\.Z}ukowski}.
\newblock {Information causality as a physical principle}.
\newblock {\em Nature}, 461:1101--1104, October 2009.
\newblock \href {http://arxiv.org/abs/quant-ph/0905.2292}
  {\path{arXiv:quant-ph/0905.2292}}, \href
  {https://doi.org/10.1038/nature08400} {\path{doi:10.1038/nature08400}}.

\bibitem{Priceretro}
H.~{Price}.
\newblock {A Neglected Route to Realism About Quantum Mechanics}.
\newblock {\em ArXiv General Relativity and Quantum Cosmology e-prints}, June
  1994.
\newblock \href {http://arxiv.org/abs/gr-qc/9406028}
  {\path{arXiv:gr-qc/9406028}}.

\bibitem{PRICE_1994}
H.~Price.
\newblock A neglected route to realism about quantum mechanics.
\newblock {\em Mind}, 103(411):303–336, 1994.
\newblock URL: \url{http://dx.doi.org/10.1093/mind/103.411.303}, \href
  {https://doi.org/10.1093/mind/103.411.303}
  {\path{doi:10.1093/mind/103.411.303}}.

\bibitem{Rohrlich}
D.~{Rohrlich} and S.~{Popescu}.
\newblock {Nonlocality as an axiom for quantum theory}.
\newblock {\em eprint arXiv:quant-ph/9508009}, August 1995.
\newblock \href {http://arxiv.org/abs/quant-ph/quant-ph/9508009}
  {\path{arXiv:quant-ph/quant-ph/9508009}}.

\bibitem{Seevinck2}
M.~{Seevinck}.
\newblock {Monogamy of Correlations vs. Monogamy of Entanglement}.
\newblock {\em ArXiv e-prints}, August 2009.
\newblock \href {http://arxiv.org/abs/quant-ph/0908.1867}
  {\path{arXiv:quant-ph/0908.1867}}.

\bibitem{Shimony}
Abner Shimony.
\newblock Events and processes in the quantum world.
\newblock In Roger Penrose and C.~J. Isham, editors, {\em Quantum Concepts in
  Space and Time}, pages 182--203. New York ;Oxford University Press, 1986.

\bibitem{Shrapnel_2018}
Sally Shrapnel and Fabio Costa.
\newblock Causation does not explain contextuality.
\newblock {\em Quantum}, 2:63, May 2018.
\newblock URL: \url{http://dx.doi.org/10.22331/q-2018-05-18-63}, \href
  {https://doi.org/10.22331/q-2018-05-18-63}
  {\path{doi:10.22331/q-2018-05-18-63}}.

\bibitem{Spekkenscontextuality}
R.~W. {Spekkens}.
\newblock {Contextuality for preparations, transformations, and unsharp
  measurements}.
\newblock {\em Physical Review A}, 71(5):052108, May 2005.
\newblock \href {http://arxiv.org/abs/quant-ph/quant-ph/0406166}
  {\path{arXiv:quant-ph/quant-ph/0406166}}, \href
  {https://doi.org/10.1103/PhysRevA.71.052108}
  {\path{doi:10.1103/PhysRevA.71.052108}}.

\bibitem{CoverThomas}
J.~{Thomas} and T.~{Cover}.
\newblock {\em Elements of Information Theory}.
\newblock Wiley, 2006.

\bibitem{timpson2008philosophical}
Chris Timpson.
\newblock Philosophical aspects of quantum.
\newblock {\em The Ashgate companion to contemporary philosophy of physics},
  page 197, 2008.

\bibitem{Toner}
B.~{Toner}.
\newblock {Monogamy of non-local quantum correlations}.
\newblock {\em Proceedings of the Royal Society of London Series A},
  465:59--69, January 2009.
\newblock \href {http://arxiv.org/abs/quant-ph/quant-ph/0601172}
  {\path{arXiv:quant-ph/quant-ph/0601172}}, \href
  {https://doi.org/10.1098/rspa.2008.0149} {\path{doi:10.1098/rspa.2008.0149}}.

\bibitem{Toner2}
B.~{Toner} and F.~{Verstraete}.
\newblock {Monogamy of Bell correlations and Tsirelson's bound}.
\newblock {\em eprint arXiv:quant-ph/0611001}, November 2006.
\newblock \href {http://arxiv.org/abs/quant-ph/quant-ph/0611001}
  {\path{arXiv:quant-ph/quant-ph/0611001}}.

\bibitem{Whartoninformation}
K.~{Wharton}.
\newblock {Quantum States as Ordinary Information}.
\newblock {\em ArXiv e-prints}, March 2014.
\newblock \href {http://arxiv.org/abs/1403.2374} {\path{arXiv:1403.2374}}.

\bibitem{Wheeler1989-WHEIPQ}
John~Archibald Wheeler.
\newblock Information, physics, quantum: The search for links.
\newblock In {\em Proceedings III International Symposium on Foundations of
  Quantum Mechanics}, pages 354--358. 1989.

\end{thebibliography}

\end{document}